\documentclass[conference]{IEEEtran}
\IEEEoverridecommandlockouts
\usepackage{cite}
\usepackage{amsthm}
\newtheorem{proposition}{Proposition}
\usepackage{amsmath,amssymb,amsfonts}
\usepackage{algorithmic}
\usepackage{graphicx}
\usepackage{textcomp}
\usepackage{pgfplotstable,filecontents}
\usepackage{circuitikz}
\usetikzlibrary{fit}
\usepackage{svg}
\usepackage{url}
\usepackage{nicefrac}
\usepackage{csquotes}
\usepackage{tabularx}
\usepackage{booktabs}
\usepackage{makecell}
\usepackage{braket}
\usepackage{balance}
\usepackage{yquant}
\yquantset{register/default name=$\ket{\reg_{\idx}}$}

\usepackage{algorithm2e}
\usepackage{bm}
\usepackage[inline]{enumitem}
\usepackage{graphbox, graphicx}
\usepackage{hyperref}
\usepackage{mathtools}
\usepackage{pgfplots}
\usepackage{subcaption}

\renewcommand{\vec}[1]{\boldsymbol{#1}}
\newcommand{\mat}[1]{\boldsymbol{#1}}
\newcommand{\trn}[1]{#1^{\top}}
\newcommand{\ipt}[2]{\trn{#1}{#2}}

\newcommand{\bb}{\boldsymbol{b}}

\newcommand{\bC}{\boldsymbol{C}}

\newcommand{\cU}{\mathcal{U}}

\DeclareMathOperator*{\argmax}{arg\,max}
\DeclareMathOperator*{\argmin}{arg\,min}
\newcommand{\amin}[1]{\argmin_{#1}}
    
\begin{document}

\bstctlcite{IEEEexample:BSTcontrol}

\title{Towards Bundle Adjustment for Satellite Imaging\\via Quantum Machine Learning}

\makeatletter
\newcommand{\linebreakand}{%
  \end{@IEEEauthorhalign}
  \hfill\mbox{}\par
  \mbox{}\hfill\begin{@IEEEauthorhalign}
}
\makeatother

\author{
\IEEEauthorblockN{Nico Piatkowski}
\IEEEauthorblockA{{Fraunhofer IAIS, ME Group}\\
nico.piatkowski@iais.fraunhofer.de}
\and
\IEEEauthorblockN{Thore Gerlach}
\IEEEauthorblockA{{Fraunhofer IAIS, ME Group}\\
thore.gerlach@iais.fraunhofer.de}
\and
\IEEEauthorblockN{Romain Hugues}
\IEEEauthorblockA{{Thales Alenia Space}\\
romain.hugues@thalesaleniaspace.com}
\linebreakand
\IEEEauthorblockN{Rafet Sifa}
\IEEEauthorblockA{{Fraunhofer IAIS, ME Group}\\
rafet.sifa@iais.fraunhofer.de}
\and
\IEEEauthorblockN{Christian Bauckhage}
\IEEEauthorblockA{{Fraunhofer IAIS, ME Group}\\
christian.bauckhage@iais.fraunhofer.de}
\and
\IEEEauthorblockN{Frederic Barbaresco}
\IEEEauthorblockA{{Thales Land and Air Systems}\\
frederic.barbaresco@thalesgroup.com}
}

\maketitle

\begin{abstract}
Given is a set of images, where all images show views of the same area at different points in time and from different viewpoints. The task is the alignment of all images such that relevant information, e.g., poses, changes, and terrain, can be extracted from the fused image. In this work, we focus on quantum methods for keypoint extraction and feature matching, due to the demanding computational complexity of these sub-tasks. To this end, $k$-medoids clustering, kernel density clustering, nearest neighbor search, and kernel methods are investigated and it is explained how these methods can be re-formulated for quantum annealers and gate-based quantum computers. Experimental results obtained on digital quantum emulation hardware, quantum annealers, and quantum gate computers show that classical systems still deliver superior results. However, the proposed methods are ready for the current and upcoming generations of quantum computing devices which have the potential to outperform classical systems in the near future. 
\end{abstract}

\begin{IEEEkeywords}
bundle adjustment, quantum machine learning, quantum annealing, quantum gate circuits
\end{IEEEkeywords}

\section{Introduction}
Quantum computing devices became available recently. It is, however, important to understand that these devices underlie heavy resource limitations. First, the number of available qubits limits the sheer size of the problems that can be solved. Second, deep circuits are subject to decoherence, which destroys the quantum state. Third, the quantum processors are subject to various sources of noise that affects the computation. Thus, various algorithms which enjoy superior theoretical properties, e.g. amplitude amplification \cite{Brassard2000QuantumAA}, cannot be executed faithfully by the current and upcoming generations of noisy intermediate scale quantum (NISQ) computers.
Despite these limitations, we explain how a relevant class of computer vision problems can be transferred to the quantum domain. More precisely, we consider a situation in which a set of $3$-dimensional points from a scene is viewed by $m$ cameras. Given a list of image coordinates of these points in the camera coordinates, finding the set of camera positions, altitudes, imaging parameters, and the point's $3$-dimensional locations is a reconstruction process. Bundle adjustment is the estimation that involves the minimization of the re-projection error. It usually goes through an iterative process and requires a good initialization \cite{Hartley/Zisserman/2003a}.
Solutions to this task allow for the extraction of $3$-dimensional coordinates describing the image geometry and the intrinsic coordinate system of each image. By fusing the available images to construct a single large image, one may eventually extract valuable information, e.g. a classification of each pixel of whether it belongs to a moving object.
One may split this process into various sub-tasks: 
The first step consists of \emph{extracting keypoints} (also called interest points or feature points) from every image. These correspond to characteristic pixels of the images, e.g. corner points. In the naive setting, all pixels may act as a keypoint. %
Next, keypoints which are common to multiple images must be \emph{matched}. As the set of pixels can be large, these two tasks can be very computationally demanding. Based on the correspondence between feature points, one may finally identify a projection that aligns the coordinate systems of all images. When all images are fused, missing areas are identified and overlapping areas with contradicting pixel data are segmented to identify moving objects.\\
\par The final alignment step is done by finding transformations between different images which align them to a single plane. These transformations can have varying forms, e.g. a homography or a fundamental matrix. 
Without further processing, this sub-tasks is inherently continuous and thus not very well suited for quantum computation. Classical methods include the \enquote{eight-point algorithm} \cite{abdel2015direct}, direct linear transformation (DLT) \cite{hartley1997defense}, and enhanced correlation coefficient (ECC) \cite{evangelidis2013efficient}.
The refinement of the alignment step is referred to as \emph{bundle adjustment} \cite{triggs1999bundle}.\\
\par Mathematically, the problem can be formulated as follows: Assume that $n$ $3$-dimensional points are visible through $m$ different views and let $\vec{p}_{ij}$ be the projection of the $i$-th point onto the plane containing the $j$-th image. Since $\vec{p}_{ij}$ may not lie in the image itself, we define a binary variable $b_{ij}$ which is 1 if and only if point $i$ is visible in image $j$. Furthermore, assume that the camera that created the $j$-th image can be characterized by a vector $\vec{w}_j$, and every $3$-dimensional point $i$ by a vector $\vec{r}_i$. The objective is now to minimize the total re-projection error
\begin{align}
f_{BA}(\vec{w}_j, \vec{r}_i)=\sum_{i=1}^{n}\sum_{j=1}^{m}b_{ij}\left\|\pi(\vec{w}_j, \vec{r}_i)- \vec{p}_{ij}\right\|^2_2,\label{eq:ba_objective}
\end{align}
where $\pi$ corresponds to the predicted projection. 

The goal of this paper is to investigate how far current quantum computing resources can be used to tackle the problem of bundle adjustment with machine learning. %
As explained above, we focus on keypoint extraction and feature matching, which are both computationally hard problems. Due to their discrete structure, these sub-tasks exhibit a large potential for improvements via quantum computation, as known from other areas of signal processing \cite{Presles/etal/2021a}.\\
\par Our contributions can be summarized as follows:
\begin{itemize}
	\item We propose a novel re-interpretation of keypoint extraction as a clustering problem. Based on this insight, we introduce two NISQ-compatible quantum keypoint extraction methods (Sec.~\ref{sec:quantum_clustering}). 
	\item Given keypoints, we construct a Hamiltonian whose ground state is the solution to the matching problem (Sec.~\ref{sec:quantum_matching}). Again, resulting in a NISQ-compatible quantum algorithm. 
	\item To the best of our knowledge, we propose the first combination of adiabatic quantum computing and quantum gate computing, whereas a gate-based quantum kernel function is used within a quadratic unconstrained binary optimization (QUBO) problem that is solved on a quantum annealer.
	\item An experimental evaluation of our methods on digital quantum emulators and actual quantum hardware shows the effectiveness and NISQ-compatibility of our methods (Sec.~\ref{sec:experiments}).
\end{itemize}

\section{Notation and Background}
Let us summarize the notation and background necessary for the subsequent development. The set that contains the first $n$ strictly positive integers is denoted by $[n]=\{1,2,\dots,n\}$ and $[a,b]=[b]\setminus [a-1]$ with $[0]=\emptyset$. 

\subsection{Gate-based Quantum Computing}
An $N$-qubit quantum circuit $\bC$ takes an input state $\ket{\psi_{in}}$---typically the all-$0$ state $\ket{0^N}$---and generates an $N$-bit vector $\psi_{out}\in\{0,1\}^N$. 
The act of reading out the result $\bb$ from $\bC$ is called {\em measurement}. Here, any quantum state vector $\ket{\psi}$ denotes a $2^N$-dimensional complex vector. State vectors are always normalized such that $\sum_{i=1}^{2^N} |\ket{\psi}_i|^2 = 1$. Moreover, the squared absolute value of the $i$-th dimension of $\ket{\psi}$ is the probability for measuring the binary representation of $i$ as output of circuit $\bC$. E.g., the probability for measuring the binary representation of $10$ is $|\braket{10\mid\psi}|^2$, where $\braket{a\mid b}$ denotes the ordinary inner product between vectors $\ket{a}$ and $\ket{b}$. 
All gate-based operations on the $N$ input qubits must be unitary operators, typically acting on one or two qubits at a time. These low-order operations can be composed to form more complicated qubit transformations. 
Any unitary operator $U$ satisfies $U^\dagger U=U U^\dagger=I$ and its eigenvalues have modulus (absolute value) $1$. Here, $I$ denotes the identity and $\dagger$ denotes the conjugate transpose. 
Borrowing terminology from digital computing, unitary operators acting on qubits are also called \emph{quantum gates}. 
In the context of this work, we will be especially interested in the gates\[
  \sigma_x = \begin{bmatrix}
0 & 1\\
1 & 0
\end{bmatrix}
,\hspace{0.25cm}
  \sigma_z = \begin{bmatrix}
1 & 0\\
0 & -1
\end{bmatrix}
,\hspace{0.25cm}
  \sigma_I = \begin{bmatrix}
1 & 0\\
0 & 1
\end{bmatrix},
\]and $H=\frac{1}{\sqrt{2}}(\sigma_x+\sigma_z)$. The matrices $\sigma_x$, $\sigma_I$, and $\sigma_z$ are called Pauli matrices. 
Finally, the action of any quantum circuit $\bC$ can be written as a product of unitaries:
\[
\bC = U_d U_{d-1} U_{d-2} \dots U_1\;,
\]
where $d$ is the {\em depth} of the circuit. An exemplary quantum gate circuit is shown in Fig.~\ref{fig:qkernelmachine}. 
It is important to understand that a quantum gate computer receives its circuit symbolically as a sequence of low dimensional unitaries---the implied $2^n \times 2^n$ matrix is never materialized. 
A detailed introduction into that topic can be found in \cite{Nielsen/Chuang/2016a}. 

\subsection{Adiabatic Quantum Computing}
Adiabatic quantum computing (AQC) relies on on the \emph{adiabatic theorem} \cite{Born1928-BDA} which states that, if a quantum system starts in the \emph{ground state} of a \emph{Hamiltonian operator} which then gradually changes over a period of time, the system will end up in the ground state of the resulting Hamiltonian. Since Hamiltonians are energy operators, their ground states correspond to the lowest energy state of the quantum system under consideration. 

To harness AQC for problem solving, one prepares a system of qubits in the ground state of a simple, problem independent Hamiltonian and then adiabatically evolves it towards a Hamiltonian whose ground state corresponds to a solution to the problem at hand \cite{Albash2018-AQC}. This can be done on quantum annealers \cite{Bian2010-TIM,Johnson2011-QAW,Lanting2014-EIA,Henriet/etal/2020a} which are particularly tailored towards solving \emph{quadratic unconstrained binary optimization} problems of the form
\begin{equation}
\label{eq:AQC:QUBO}
\vec{z}^* = \amin{\vec{z} \in \{0,1\}^N} F(\vec{z}) = \amin{\vec{z} \in \{0,1\}^N} \trn{\vec{z}} \mat{Q} \, \vec{z} + \ipt{\vec{q}}{\vec{z}}.
\end{equation}

The connection between QUBOs and quantum computing becomes evident by considering the following Hamiltonian:
\begin{equation}\label{eq:AQC:QUBOH}
H_{\mat{Q}} = \sum_{i=1}^N \sum_{j=1}^N \, \mat{Q}_{ij} \, \frac{\sigma_I^i-\sigma_z^i}{2} \, \frac{\sigma_I^j-\sigma_z^j}{2} + \sum_{i=1}^N \, \vec{q}_i \, \frac{\sigma_I^i-\sigma_z^i}{2}\;,
\end{equation}
where  $\sigma_z^j = I \otimes  \ldots \otimes I \otimes \sigma_z \otimes I \otimes  \ldots \otimes I$ denotes the Pauli matrix $\sigma_z$ acting on the $j$-th qubit. By design, we have a 1-to-1 correspondence between Eqs.~\eqref{eq:AQC:QUBO} and \eqref{eq:AQC:QUBOH} such that the smallest eigenvalue of $H_{\mat{Q}}$ is identical to the minimum of $F$. Moreover, the eigenstate of $H_{\mat{Q}}$ that corresponds to the minimum eigenvalue is the pure state $\ket{\vec{z^*}}$. 

QUBOs are not only of interest for quantum annealers: 
The same type of problem can be solved on gate-based quantum computers via quantum approximate optimization \cite{farhi2014quantum,hadfield2019quantum} also known as variational quantum eigensolver \cite{Peruzzo2014-AVE}. Recent results suggest that these techniques can be more robust against exponentially small spectral gaps, and thus, they may deliver better solutions than actual quantum annealers \cite{Zhou2020-QAO}. 

Whenever the dimension of a QUBO exceeds the number of available hardware qubits, we consider a splitting procedure that is described together with the experimental details.

\section{Methodology}\label{sec:methods}
In this section we describe our methods for approaching the tasks of keypoint extraction and feature matching.
In general, pixel data can either be represented by raw color channels, or sophisticated feature space mappings, e.g., scale-invariant feature transform (SIFT) \cite{lowe2004distinctive}, (accelerated) KAZE \cite{alcantarilla2012kaze,alcantarilla2011fast}, or low-dimensional embeddings based on geometric hashing \cite{lang2010astrometry,evangelidis2013efficient}. 
Nevertheless, if not stated differently, 
we assume that an image is encoded as a set of pixels $I=\{\vec{p}_1,\dots,\vec{p}_{M}\}$, where $\vec{p_i}=(x_i,y_i,r_i,g_i,b_i)$ represents a pixel with location $(x_i,y_i)$ and color channels $(r_i,g_i,b_i)$. 

\subsection{Quantum Keypoint Extraction}\label{sec:quantum_clustering}
The goal of keypoint detection is to extract a subset of relevant pixels which describe the full image well\textemdash{}we re-interpreted this step as clustering problem. Due to the NP-hardness of clustering, offloading the corresponding computation to a quantum processor promises large benefits. %

\subsubsection{Quantum $k$-Medoids Clustering}
For solving the $k$-medoids clustering problem on a quantum computer, we reformulate the classical $k$-medoids objective into a QUBO problem \cite{bauckhage2019qubo}.
The idea is to combine the selection of $k$ mutually far apart objects as well as the selection of $k$ most central objects:
\begin{align*}
\argmax_{C\subset \mathcal I}\sum_{i,j=1}^{k} \left\|\vec{c}_i- \vec{c}_j\right\|_2,\quad
\argmin_{C\subset \mathcal I}\sum_{i=1}^{k}\sum_{j=1}^{M} \left\|\vec{c}_i- \vec{p}_j\right\|_2,
\end{align*}
where $\mathcal I\supset C=\{\vec{c}_1,\dots,\vec{c}_{k}\}$ corresponds to the set of $k$ cluster medoids.

By considering the distance matrix $\mat{D}$ with $D_{ij}=\left\|\vec{p}_i- \vec{p}_j\right\|_2$, these two optimization objectives can put together into one QUBO problem.
\begin{proposition}\label{prop:k_medoids}
A QUBO formulation for the $k$-Medoids clustering problem 
is given by
\begin{align}
\mat{Q}=\gamma \mathbf{1}_N\trn{\mathbf{1}_N}-\alpha \mat{D},\ \vec{q}=\beta \mat{D}\mathbf{1}_N-2\gamma k\mathbf{1}_N,
\label{eq:qubo_k_medoid}
\end{align}
\end{proposition}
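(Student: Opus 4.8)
\section*{Proof proposal for Proposition~\ref{prop:k_medoids}}

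The plan is to introduce a binary decision vector $\vec{z}\in\{0,1\}^N$ with $N=M$, where $z_i=1$ encodes that pixel $\vec{p}_i$ is selected as one of the $k$ medoids, i.e.\ the medoid set is $C=\{\vec{p}_i : z_i=1\}$. The proof then amounts to rewriting each of the two stated objectives, together with the implicit cardinality requirement $|C|=k$, as expressions that are at most quadratic in $\vec{z}$, and collecting the coefficients of the quadratic and linear parts.

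First I would handle the ``mutually far apart'' objective. Since $z_iz_j=1$ precisely when both $i$ and $j$ are chosen, and $D_{ii}=0$, we have $\sum_{i,j=1}^{k}\left\|\vec{c}_i-\vec{c}_j\right\|_2=\sum_{i,j=1}^{N}D_{ij}\,z_iz_j=\trn{\vec{z}}\mat{D}\vec{z}$, so maximizing it is the same as minimizing $-\alpha\,\trn{\vec{z}}\mat{D}\vec{z}$ for a weight $\alpha>0$, contributing $-\alpha\mat{D}$ to $\mat{Q}$. Next, the ``most central'' objective becomes purely linear: $\sum_{i=1}^{k}\sum_{j=1}^{M}\left\|\vec{c}_i-\vec{p}_j\right\|_2=\sum_{i=1}^{N}z_i\sum_{j=1}^{M}D_{ij}=\trn{(\mat{D}\mathbf{1}_N)}\vec{z}$, which I weight by $\beta>0$ and add to $\vec{q}$. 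Finally I would enforce $\sum_i z_i=k$ through the standard quadratic penalty $\gamma(\trn{\mathbf{1}_N}\vec{z}-k)^2=\gamma\,\trn{\vec{z}}(\mathbf{1}_N\trn{\mathbf{1}_N})\vec{z}-2\gamma k\,\trn{\mathbf{1}_N}\vec{z}+\gamma k^2$, contributing $\gamma\mathbf{1}_N\trn{\mathbf{1}_N}$ to $\mat{Q}$ and $-2\gamma k\mathbf{1}_N$ to $\vec{q}$. Summing the three contributions gives $F(\vec{z})=\trn{\vec{z}}(\gamma\mathbf{1}_N\trn{\mathbf{1}_N}-\alpha\mat{D})\vec{z}+\trn{(\beta\mat{D}\mathbf{1}_N-2\gamma k\mathbf{1}_N)}\vec{z}+\gamma k^2$; the additive constant $\gamma k^2$ does not affect the minimizer and can be dropped, leaving exactly the $\mat{Q}$ and $\vec{q}$ of Eq.~\eqref{eq:qubo_k_medoid}.

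The only genuinely delicate point — and the step I would be most careful about — is the interaction between the penalty weight $\gamma$ and the objective weights $\alpha,\beta$: one must argue that there is a regime of $(\alpha,\beta,\gamma)$ in which the ground state of $F$ has Hamming weight exactly $k$, so that the minimizer really corresponds to a valid $k$-medoid selection rather than to an off-budget subset that happens to have lower unconstrained energy. Crude bounds such as $0\le\trn{\vec{z}}\mat{D}\vec{z}\le N^2\max_{ij}D_{ij}$ and $0\le\trn{(\mat{D}\mathbf{1}_N)}\vec{z}\le N^2\max_{ij}D_{ij}$ suffice to produce an explicit lower bound on $\gamma$ guaranteeing feasibility of the optimum. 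Everything else is bookkeeping: expanding the square, using $D_{ii}=0$ to discard diagonal terms, and matching coefficients against the QUBO form~\eqref{eq:AQC:QUBO}.
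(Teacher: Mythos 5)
Your derivation is correct and is exactly the standard construction the paper delegates to its citation (the paper's own ``proof'' is only a pointer to Bauckhage et al.): encode medoid selection as $\vec{z}\in\{0,1\}^N$, write the spread objective as $-\alpha\trn{\vec{z}}\mat{D}\vec{z}$, the centrality objective as $\beta\trn{(\mat{D}\mathbf{1}_N)}\vec{z}$, and the cardinality constraint as the quadratic penalty $\gamma(\trn{\mathbf{1}_N}\vec{z}-k)^2$, whose expansion yields precisely $\gamma\mathbf{1}_N\trn{\mathbf{1}_N}$ in $\mat{Q}$ and $-2\gamma k\mathbf{1}_N$ in $\vec{q}$ after dropping the constant $\gamma k^2$. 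Your closing remark about choosing $\gamma$ large enough relative to $\alpha,\beta$ so that the unconstrained minimizer has Hamming weight $k$ is the right caveat, and is consistent with how the paper later tunes these multipliers in the experiments.
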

with dimension $N=M$, where $\alpha$, $\beta$ and $\gamma$ are Lagrange multipliers and $\mathbf{1}_N$ is the $N$-dimensional vector consisting only of ones.
\begin{proof}
For a detailed derivation, we refer to \cite{bauckhage2019qubo}.
\end{proof}

\subsubsection{Quantum Kernel Density Clustering}\label{sec:kdc}
 
Kernel density clustering (KDC) maintains estimates of the probability densities of pixels via kernel density (Parzen) estimates
\begin{align}
\rho_{\mathcal I}(\vec{p})=\frac{1}{M}\sum_{i=1}^{M}K(\vec{p}, \vec{p}_i), \quad
\rho_C(\vec{p})=\frac{1}{k}\sum_{j=1}^{k}K(\vec{p}, \vec{c}_j), \label{eq:parzen}
\end{align}
where $K(\cdot,\cdot)$ is a kernel function. We assume $K$ to be a Mercer kernel, i.e. there exists some feature map $\phi$, such that
\begin{align*}
K(\cdot,\cdot)=\braket{\phi(\cdot),\ \phi(\cdot)}.
\end{align*}
The density estimates in \eqref{eq:parzen} can then be rewritten as
\begin{align}\rho_X(\vec{p})=\braket{\phi(\vec{p}),\ \hat\phi_X},\ \hat\phi_X=\frac{1}{|X|}\sum_{\vec{x}\in X}\phi(\vec{x}),\ X\subset \mathcal I.
\label{eq:kdc_feauture_map}
\end{align}
The KDC problem can be formulated as minimizing the discrepancy between the feature map distributions in \eqref{eq:kdc_feauture_map} by finding optimal cluster centroids $C$:
\begin{align*}
\argmin_C\left\|\hat\phi_{\mathcal I}-\hat\phi_C\right\|_2^2
=\argmin_C
\braket{\hat\phi_C,\ \hat\phi_C} - 2 \braket{\hat\phi_{\mathcal I},\ \hat\phi_C}. 
\end{align*} 
\begin{proposition}\label{prop:kdc}
A QUBO formulation for the KDC problem is given by
\begin{align}
\mat{Q}=\frac{1}{k^2}\mat{\mathcal K} +\lambda\mathbf{1}_N\trn{\mathbf{1}_N},\ \vec{q}=-2\left( \frac{1}{kN}\mat{\mathcal K}\mathbf{1}_N+\lambda k \mathbf{1}_N\right), \label{eq:qubo_kdc}
\end{align}
with dimension $N=M$, where $\lambda$ is a Lagrange multiplier and $\mat{\mathcal K}$ is the kernel matrix with $\mathcal K_{ij}=K(\vec{p}_i,\vec{p}_j)$.
\end{proposition}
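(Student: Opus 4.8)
The plan is to turn the choice of the $k$ centroids $C\subset\mathcal I$ into a binary decision vector $\vec{z}\in\{0,1\}^N$ with $N=M$, setting $z_i=1$ exactly when $\vec{p}_i\in C$. As long as the cardinality constraint $\trn{\mathbf{1}_N}\vec{z}=k$ holds, \eqref{eq:kdc_feauture_map} gives $\hat\phi_C=\frac1k\sum_{i=1}^N z_i\,\phi(\vec{p}_i)$, so the KDC objective becomes a function of $\vec{z}$ alone and we can read off $\mat{Q}$ and $\vec{q}$ by matching quadratic and linear coefficients.

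First I would expand the discrepancy exactly as in the excerpt,
\[
\|\hat\phi_{\mathcal I}-\hat\phi_C\|_2^2=\braket{\hat\phi_C,\hat\phi_C}-2\braket{\hat\phi_{\mathcal I},\hat\phi_C}+\braket{\hat\phi_{\mathcal I},\hat\phi_{\mathcal I}},
\]
and discard the last term, which is independent of $C$ and hence irrelevant for the minimizer. Then, using the Mercer property $\mathcal K_{ij}=K(\vec{p}_i,\vec{p}_j)=\braket{\phi(\vec{p}_i),\phi(\vec{p}_j)}$ and bilinearity of $\braket{\cdot,\cdot}$, I would rewrite the two remaining inner products purely in terms of the kernel matrix:
\[
\braket{\hat\phi_C,\hat\phi_C}=\frac{1}{k^2}\sum_{i,j=1}^N z_iz_j\,\mathcal K_{ij}=\frac{1}{k^2}\trn{\vec{z}}\mat{\mathcal K}\vec{z},\qquad
\braket{\hat\phi_{\mathcal I},\hat\phi_C}=\frac{1}{Nk}\sum_{i,j=1}^N z_j\,\mathcal K_{ij}=\frac{1}{Nk}\trn{\vec{z}}\mat{\mathcal K}\mathbf{1}_N,
\]
where the last equality uses the symmetry $\mat{\mathcal K}=\trn{\mat{\mathcal K}}$. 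This already produces the data-dependent part of the QUBO: a quadratic coefficient matrix $\frac1{k^2}\mat{\mathcal K}$ and a linear coefficient vector $-\frac{2}{Nk}\mat{\mathcal K}\mathbf{1}_N$.

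It remains to encode $|C|=k$. I would add the standard quadratic penalty $\lambda(\trn{\mathbf{1}_N}\vec{z}-k)^2=\lambda\,\trn{\vec{z}}\mathbf{1}_N\trn{\mathbf{1}_N}\vec{z}-2\lambda k\,\trn{\mathbf{1}_N}\vec{z}+\lambda k^2$, drop the constant $\lambda k^2$, and merge with the previous terms; the quadratic part contributes $\lambda\mathbf{1}_N\trn{\mathbf{1}_N}$ to $\mat{Q}$ and the linear part contributes $-2\lambda k\,\mathbf{1}_N$ to $\vec{q}$, which is precisely \eqref{eq:qubo_kdc}. The one place that needs care—and the main, though routine, obstacle—is justifying the substitution $\hat\phi_C=\frac1k\sum_i z_i\phi(\vec{p}_i)$ at the optimum: one has to invoke the usual penalty-method argument that for $\lambda$ sufficiently large (scaling with the relevant norm/spectral data of $\mat{\mathcal K}$) every global minimizer of $F(\vec{z})=\trn{\vec{z}}\mat{Q}\vec{z}+\ipt{\vec{q}}{\vec{z}}$ is feasible, i.e. satisfies $\trn{\mathbf{1}_N}\vec{z}=k$, so that the constrained and penalized problems share the same solution.
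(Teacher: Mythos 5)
Your derivation is correct and is exactly the standard argument: the paper itself gives no proof here, deferring to the cited reference \cite{bauckhage2020hopfield}, and your expansion of $\|\hat\phi_{\mathcal I}-\hat\phi_C\|_2^2$ via the kernel trick plus the quadratic cardinality penalty reproduces \eqref{eq:qubo_kdc} term by term. Your closing remark about needing $\lambda$ large enough to enforce $\trn{\mathbf{1}_N}\vec{z}=k$ (so that the $1/k$ normalization in $\hat\phi_C$ is legitimate) is a valid and worthwhile caveat, which the paper only acknowledges informally in the experimental section.
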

\begin{proof}
For a detailed derivation we refer to \cite{bauckhage2020hopfield}.
\end{proof}
Up to now, classical Gaussian kernels, with $K(\boldsymbol{x},\boldsymbol{y})=\exp(-\gamma\|\boldsymbol{x}-\boldsymbol{y}\|_2^2)$, have been considered for KDC in the literature. Here, however, we also consider quantum kernels, i.e. the kernel matrix $\mat{\mathcal K}$ is computed via the quantum gate circuit that is shown in Fig.~\ref{fig:qkernelmachine}. 

In contrast to $k$-means or $k$-medoids, the kernel density estimation results in cluster centroids that contain more information for dense parts of an image and hence deliver substantially different results.

\subsection{Quantum Feature Matching}\label{sec:quantum_matching}

Based on the output of keypoint extraction, suiting matches between keypoints of different images must be identified. To this end, let $\vec{p}^{(1)}_i, 1\leq i\leq n$ and $\vec{p}^{(2)}_j, 1\leq j\leq m$ be keypoints extracted from images (1) and (2), respectively. The task is to find pairs $(\vec{p}^{(1)}_{i}, \vec{p}^{(2)}_{j})$ s.t. $\vec{p}^{(1)}_i$ in the first image, corresponds to $\vec{p}^{(2)}_{j}$ in the second image.

For this task, raw information about a single pixel is insufficient, since both images might be scaled, rotated or illuminated in a different way. Thus, feature descriptors, e.g. SIFT or AKAZE, are computed for each keypoint, i.e. $\vec{x}_i$ for $\vec{p}^{(1)}_i$ and $\vec{y}_j$ for $\vec{p}^{(2)}_j$.
In addition to the bare descriptors, Kernel methods can improve the matching further by projecting the descriptors into a very high-dimensional feature space. %

\subsubsection{Feature Matching as QUBO}\label{sec:matching_qubo}
Let $\mat{\mathcal K}$ be the matrix 
$\left(K(\vec{x}_i,\vec{y}_j) \right)_{i,j=1}^{n,m}$ of kernel
values between all pairs of keypoints, $\mat{\mathcal K}\in\mathbb{R}^{n\times m}$.
\begin{proposition}\label{prop:matching}
A QUBO formulation for the matching problem is given by
\begin{align}
\mat{Q}&=\sum_{i=1}^{n}\beta \mat{A}_i+\gamma\left( \vec{b}_i\trn{\vec{b}_i}+\vec{c}_i\trn{\vec{c}_i}+\vec{b}_i\trn{\vec{c}_i}+\vec{c}_i\trn{\vec{b}_i}\right), \label{eq:matching_qubo_matrix}\\
\vec{q}&= \trn{\mat{P}}\left((1-\alpha)\mathbf{1}_{nm}-\text{vec}\left(\mat{\mathcal K}\right) \right)-2\gamma k\sum_{i=1}^{n}\trn{\vec{b}_i}\vec{c}_i,
\label{eq:matching_qubo_offset}
\end{align}
with dimension $N=n(m+l)$, $\mat{A}_i=\trn{\mat{P}}\mat{S}_i\mat{P}$, $\vec{b}_i=\trn{\mat{P}}\mathbf{1}_{nm}^{(i)}$ and $\vec{c}=\trn{\vec{d}}\mat{P}_i$, where
\begin{align*}
&\mat{P}\in \{0,1\}^{nm\times N},\ \mat{P}\vec{z}=\vec{z}_{[1,nm]}, 
\\
&\mat{P}_i\in\{0,1\}^{l\times N},\ \mat{P}_i\vec{z}=\vec{z}_{[nm+(i-1)l+1,nm+il]},
\end{align*}
are projection matrices acting on $\vec{z}\in \{0,1\}^{N}$. The parameter $k$ corresponds to the maximum number of matches for an $\vec{x}_i$,  $l=\left\lceil\log_2(k+1)\right\rceil$ and 
the parameters $\alpha$, $\beta$ and $\gamma$ are Lagrange multipliers. Furthermore for some $\vec{v}\in\{0,1\}^{nm}$
\begin{align*}
&\vec{d}\in\mathbb N^{l},\ \vec{d}=\trn{(2^0,2^1,\dots,2^{l-1})}, \\
& \mathbf{1}_{mn}^{(i)}\in\{0,1\}^{nm},\ \mathbf{1}_{mn}^{(i)}=\trn{(0, \dots, 0, \overbrace{1,\dots, 1}^{[(i-1)m+1,im]},0,\dots 0)}, \\
&\mat{S}_i\vec{v}=\trn{( \vec{v}^{(i)}, \dots, \vec{v}^{(i)},\overbrace{0,\dots, 0}^{[(i-1)m+1:im]}, \vec{v}^{(i)}, \dots, \vec{v}^{(i)})}, 
\end{align*}
with $\mat{S}_i\in\{0,1\}^{nm\times nm}, \vec{v}^{(i)}=\vec{v}_{[(i-1)m+1,im]}$.
\end{proposition}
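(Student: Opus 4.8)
The plan is to produce an explicit $0/1$ encoding of a feasible matching together with slack bits, write down the natural penalised objective on those bits, and then verify term by term that its $(\mat{Q},\vec{q})$ coincides with \eqref{eq:matching_qubo_matrix}--\eqref{eq:matching_qubo_offset}. First I would fix the variables: a matching indicator $z_{ij}\in\{0,1\}$ for $(i,j)\in[n]\times[m]$, collected into $\vec{v}=\text{vec}(\mat{Z})\in\{0,1\}^{nm}$ with $\vec{v}_{(i-1)m+j}=z_{ij}$, and for each $i\in[n]$ a block of $l=\lceil\log_2(k+1)\rceil$ slack bits $\vec{w}_i\in\{0,1\}^l$. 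Stacking $\vec{z}=(\vec{v},\vec{w}_1,\dots,\vec{w}_n)\in\{0,1\}^N$ with $N=n(m+l)$ makes $\mat{P}$ the read-out of $\vec{v}$, $\mat{P}_i$ the read-out of $\vec{w}_i$, and $s_i:=\trn{\vec{d}}\vec{w}_i$ an integer slack with $s_i\in\{0,\dots,2^l-1\}\supseteq\{0,\dots,k\}$ --- this inclusion is exactly why $l$ is chosen this way.

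Next I would write down the penalised objective. The matching should prefer high-similarity pairs, i.e.\ maximise $\trn{\text{vec}(\mat{\mathcal K})}\vec{v}$ while paying a price $1-\alpha$ per activated pair (so $\alpha$ tunes the total match count), assign each image-(2) keypoint $\vec{y}_j$ to at most one $\vec{x}_i$, and assign each $\vec{x}_i$ to at most $k$ of the $\vec{y}_j$. Encoding the last two as the quadratic penalties $\beta\sum_j\sum_{i\neq i'}z_{ij}z_{i'j}$ (which vanishes iff column $j$ carries at most one $1$, since it equals $\beta\sum_j c_j(c_j-1)$ with $c_j=\sum_i z_{ij}$) and $\gamma\sum_i(\sum_j z_{ij}+s_i-k)^2$ (which vanishes iff $\sum_j z_{ij}\le k$, by the range of $s_i$) gives
\[
F(\vec{z})=\trn{\vec{v}}\!\left((1-\alpha)\mathbf{1}_{nm}-\text{vec}(\mat{\mathcal K})\right)+\beta\sum_{j}\sum_{i\neq i'}z_{ij}z_{i'j}+\gamma\sum_{i=1}^{n}\Big(\textstyle\sum_{j}z_{ij}+s_i-k\Big)^{2}.
\]

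Then I would put $F$ into QUBO form by substituting $\vec{v}=\mat{P}\vec{z}$, $\vec{w}_i=\mat{P}_i\vec{z}$. The linear part is immediately $\trn{\mat{P}}((1-\alpha)\mathbf{1}_{nm}-\text{vec}(\mat{\mathcal K}))$, the first summand of $\vec{q}$. For the $\beta$-penalty, the definition of $\mat{S}_i$ yields $\trn{\vec{z}}\mat{A}_i\vec{z}=\trn{\vec{v}}\mat{S}_i\vec{v}=\sum_{i'\neq i}\sum_j z_{i'j}z_{ij}$, so $\beta\sum_i\mat{A}_i$ is its matrix, and since $\mat{S}_i$ (hence $\mat{A}_i$) is hollow it contributes nothing to $\vec{q}$. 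For the $\gamma$-penalty, with $\vec{b}_i=\trn{\mat{P}}\mathbf{1}_{nm}^{(i)}$ satisfying $\trn{\vec{b}_i}\vec{z}=\sum_j z_{ij}$ and $\vec{c}_i=\trn{\mat{P}_i}\vec{d}$ (the column vector, transpose of the displayed row) satisfying $\trn{\vec{c}_i}\vec{z}=s_i$, I would expand $\big((\vec{b}_i+\vec{c}_i)^\top\vec{z}-k\big)^2$: the square gives the four rank-one pieces $\vec{b}_i\trn{\vec{b}_i}+\vec{c}_i\trn{\vec{c}_i}+\vec{b}_i\trn{\vec{c}_i}+\vec{c}_i\trn{\vec{b}_i}$ in $\mat{Q}$, the cross term gives $-2\gamma k\sum_i(\vec{b}_i+\vec{c}_i)$ in $\vec{q}$ (the slack-selector summand), and the constant $\gamma n k^2$ is discarded because it does not affect the $\argmin$. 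Collecting everything reproduces \eqref{eq:matching_qubo_matrix}--\eqref{eq:matching_qubo_offset}.

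Finally I would argue correctness: choosing $\beta,\gamma>nm\big(\|\mathcal K\|_\infty+|1-\alpha|\big)$, any activation of a penalty costs at least $\beta$ resp.\ $\gamma$ (both are squares/products of integers, hence $\ge 1$ when nonzero), which the objective part cannot offset, so every global minimiser of $F$ annihilates both penalties, i.e.\ is a matching obeying the two cardinality constraints; on that feasible set $F$ reduces to $-\sum_{ij}z_{ij}\big(K(\vec{x}_i,\vec{y}_j)-(1-\alpha)\big)$, so $\vec{z}^*$ maximises the intended matching score subject to these constraints. The hard part will be the matrix-collection step: keeping the block-structured bookkeeping of $\mat{P}$, $\mat{P}_i$, $\mat{S}_i$ straight --- in particular checking that $\mat{S}_i$ (hence $\mat{A}_i$) really is hollow so the whole $\beta$-penalty lands in $\mat{Q}$, and matching the cross-terms of the expanded square precisely to the four rank-one blocks in \eqref{eq:matching_qubo_matrix} and to the slack-selector offset in \eqref{eq:matching_qubo_offset}.
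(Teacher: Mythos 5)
Your proposal is correct and follows essentially the same route as the paper: encode the matching bits plus per-row binary slack blocks, turn the one-to-one and at-most-$k$ constraints into quadratic penalties via the slack identity $\trn{\vec{v}}\mathbf{1}_{nm}^{(i)}+\trn{\vec{d}}\vec{s}_i=k$, and read off $(\mat{Q},\vec{q})$ after substituting $\vec{v}=\mat{P}\vec{z}$, $\vec{s}_i=\mat{P}_i\vec{z}$; you merely spell out the expansion, the hollowness of $\mat{S}_i$, and a concrete penalty-magnitude bound that the paper defers to a remark. Note that your linear term $-2\gamma k\sum_i(\vec{b}_i+\vec{c}_i)$ is the dimensionally consistent result of the expansion, whereas the paper's displayed $-2\gamma k\sum_i\trn{\vec{b}_i}\vec{c}_i$ appears to be a typographical slip that your derivation implicitly corrects.
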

\begin{proof}
The optimization objective for the matching problem can be formulated as follows:
\begin{align}
\min_{\vec{v}\in\{0,1\}^{nm}}& \quad \trn{\vec{v}}\left(\mathbf{1}_{nm}-\text{vec}\left(\mat{\mathcal K}\right) \right) 
\label{eq:matching_qubo_min}\\
\text{and} \ \min_{\vec{v}\in\{0,1\}^{nm}}& \quad -\trn{\vec{v}}\mathbf{1}_{nm} \label{eq:matching_qubo_max}\\
\text{subject to}& \quad \trn{\vec{v}}\mat{S_i}\vec{v}=0, \ 1\le i\le n, \label{eq:matching_qubo_duplicate} \\
& \quad \trn{\vec{v}}\mathbf{1}_{nm}^{(i)}\le k,\ 1\le i\le n. \label{eq:matching_qubo_neighbors}
\end{align}
The condition in \eqref{eq:matching_qubo_duplicate} ensures that two keypoints $\vec{x}_{i_1}$ and $\vec{x}_{i_2}$ are not matched to the same keypoint $\vec{y}_j$ while \eqref{eq:matching_qubo_neighbors} forces every keypoint $\vec{x}_i$ to be matched with maximally $k$ points $\vec{y}_{j_1},\dots,\vec{y}_{j_{k}}$. 
The inequality constraint in \eqref{eq:matching_qubo_neighbors} can be reformulated by using binary slack variables $\vec{s}_i\in\{0,1\}^{l}$
\begin{align*}
\trn{\vec{v}}\mathbf{1}_{nm}^{(i)}\le k \ &\Leftrightarrow\ \trn{\vec{v}}\mathbf{1}_{nm}^{(i)}+\trn{\vec{d}}\vec{s}_i= k,
\end{align*}
since $\trn{\vec{v}}\mathbf{1}_{nm}^{(i)}\ge 0$ and $\trn{\vec{d}}\vec{s}_i\in [0,k]$. With the technique of Lagrange multipliers the QUBO can then be formulated:
\begin{align*}
&\min_{\vec{v},\vec{s}_1,\dots,\vec{s}_{n}} \trn{\vec{v}}\left((1-\alpha)\mathbf{1}_{nm}-\text{vec}\left(\mat{\mathcal K}\right)\right) \\
&+\beta\sum_{i=1}^{n}\trn{\vec{v}}\mat{S}_i\vec{v} 
+\gamma\sum_{i=1}^{n}\left(\trn{\vec{v}}\mathbf{1}_{nm}^{(i)}+\trn{\vec{d}}\vec{s}_i- k\right)^2 \\
\Leftrightarrow& \min_{\vec{z}\in\{0,1\}^{N}} \trn{(\mat{P}\vec{z})}\left((1-\alpha)\mathbf{1}_{nm}-\text{vec}\left(\mat{\mathcal K}\right)\right) \\
+\beta&\sum_{i=1}^{n}\trn{(\mat{P}\vec{z})}\mat{S}_i\mat{P}\vec{z} 
+\gamma\sum_{i=1}^{n}\left(\trn{(\mat{P}\vec{z})}\mathbf{1}_{nm}^{(i)}+\trn{\vec{d}}\mat{P}_i\vec{z}- k\right)^2,
\end{align*}
since $\mat{P}\vec{z}=\vec{v}$ and $\mat{P}_i\vec{z}=\vec{s}_i$.
\end{proof}
The parameters $\beta$ and $\gamma$ are chosen to be large enough such that the conditions in \eqref{eq:matching_qubo_duplicate} and \eqref{eq:matching_qubo_neighbors} are adhered.
Without loss of generality, we assume $K(\vec{x}_i, \vec{y}_j)\le 1$ and choose $\alpha$ to be in $[0,1]$. A large $\alpha$ emphasizes the maximization of the number of matches in \eqref{eq:matching_qubo_max} which forces every $\vec{x}_i$ to be matched with a $\vec{y}_j$ even though they may not be very similar. 
By setting $\alpha$ close to 0, the distance minimization in \eqref{eq:matching_qubo_min} is prioritized---leading to no matches at all in the extreme case.

\subsubsection{Quantum Kernel Methods}\label{sec:quantum_kernel}

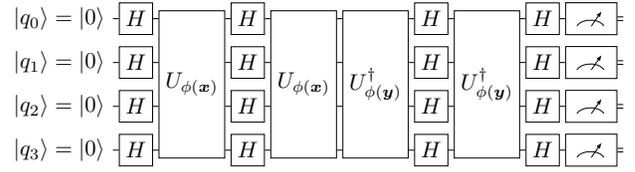
\begin{figure}[!t] %
\centering
\scalebox{0.85}{
		\begin{tikzpicture}
		\begin{yquant}
        qubit {$\ket{q_{\idx}} = \ket0$} q[4];
        h q[0];
        h q[1];
        h q[2];
        h q[3];
        box {$U_{\phi(\vec{x})}$} (q[0], q[1], q[2], q[3]);
        h q[0];
        h q[1];
        h q[2];
        h q[3];
        box {$U_{\phi(\vec{x})}$} (q[0], q[1], q[2],q[3]);
        box {$U^\dagger_{\phi(\vec{y})}$} (q[0], q[1], q[2],q[3]);
        h q[0];
        h q[1];
        h q[2];
        h q[3];
        box {$U^\dagger_{\phi(\vec{y})}$} (q[0], q[1], q[2],q[3]);
        h q[0];
        h q[1];
        h q[2];
        h q[3];
		measure q[0];
		measure q[1];
		measure q[2];
		measure q[3];
		\end{yquant}
		\end{tikzpicture}
		}
\caption{Gate-based 4-qubit quantum kernel. Data points $\boldsymbol{x}$ and $\boldsymbol{y}$ are both mapped into the $2^4$-dimensional Hilbert space via parameters of unitary gates  $U_{\phi(\vec{x})}$ and  $U_{\phi(\vec{y})}$.}
\label{fig:qkernelmachine}
\end{figure}
The feature map of quantum kernel machines is hardware specific. Instead of relying on problem specific knowledge to construct the feature map or the kernel, the intrinsically $2^N$-dimensional Hilbert space of an $N$-qubit register is utilized to realize the feature map \cite{Havlicek/etal/2019a}. A schematic representation of a corresponding quantum gate circuit is shown in Fig.~\ref{fig:qkernelmachine}. There, a maximal superposition is prepared and then passed through $N$-qubit unitaries which create the feature space transformation of the data. It is important to understand that data does not enter the circuit in the discrete qubit state space. Instead, data is passed in form of parameters of universal unitary gates $U_{\phi(\vec{x})}$ and $U_{\phi(\vec{y})}$ representing (parts of) the high-dimensional feature map. Each classical $N$-bit binary string is interpreted as one feature and the corresponding probability amplitudes of the qubit state as feature values. The actual kernel value $K(\vec{x},\vec{y})$ is then given by estimating the transition amplitude $|\left\langle \phi(\vec{x}), \phi(\vec{y}) \right\rangle|^2 = | \bra{0^N} \cU^\dagger_{\phi(\vec{y})} \cU_{\phi(\vec{x})} \ket{0^N}|^2$ with $\cU_{\phi(\cdot)}=U_{\phi(\cdot)}H^{\otimes n}U_{\phi(\cdot)}H^{\otimes n}$. Clearly, the specific choice of $U_{\phi(\cdot)}$ is not fixed and can be tuned for the application at hand. In \cite{Havlicek/etal/2019a}, the authors suggest the following unitary:
\begin{equation}\label{eq:qphi}
U_{\phi(\vec{x})} = \exp\left(-i \sum_{S\subset [N]} \phi_s(\vec{x}) \prod_{v\in S} \sigma_z^v\right).
\end{equation}

Obtaining the full kernel matrix for $N$ data points requires $N (N+1)/2$ runs of the circuit in Fig.~\ref{fig:qkernelmachine}. The resulting quantum kernel matrix is then ready to be used in our density based quantum keypoint extractor.

For compatibility with NISQ-devices, some typical pitfalls must be avoided: 
Considering local feature functions for all subsets ${S\subset [N]}$ is too costly: To see this, one has to consider the transpilation of user specified quantum circuits. Transpilation is the process of rewriting a given input circuit to match the connectivity structure and noise properties of a specific quantum processor. %
Most importantly, it encompasses the decomposition of gates involving three or more qubits into 2-qubit gates. %
As a direct result, an apparently \enquote{shallow} quantum gate circuit, consisting of a single unitary operation among $N$-qubits, can thus eventually exhibit a very high depth. High circuits depths require large decoherence and dissipation times, which are not available in the current generation of NISQ-devices. It is hence recommended to consider only pairwise features in \eqref{eq:qphi}, e.g., $S\subset [N]\wedge|S|=2$.  %

\subsubsection{Quantum Nearest Neighbor Search}\label{sec:nearest_neighbor}

Finding the nearest neighbors of each keypoint via a quantum gate circuits constitutes a possible alternative to our matching QUBO. 
In a recent contribution, Basheer et al.~\cite{Basheer2020-QKN} integrate ideas from \cite{Wiebe2015-QAF} and \cite{Lloyd2013-QML} with techniques for preparing data into amplitudes of quantum states \cite{Mitarai2018-QAD} and present specific quantum circuits for $k$-nearest neighbor search. 
First, the information on the similarity between keypoints is encoded in the amplitudes of quantum states. The similarity between a generic quantum state $\psi$ and the $j$-th keypoint $\phi_j$ is defined via $G_j=\braket{\psi | \phi_j}^2$. This computation is conducted in the sub-circuit $\mathcal C^{\operatorname{amp}}$. Then the digital conversion algorithm is applied to obtain the state $\ket{j}\ket{G_j}$, which is done by the sub-circuit $\mathcal C^{\operatorname{dig}}$. Classical gates are applied to two different states $\ket{j}\ket{G_j}$ and $\ket{y}\ket{G_y}$ to obtain the state $\ket{j}\ket{g_{y,A}(j)}$, where $y\in A$ and $A$ is the set of the current best $k$ candidate nearest neighbors. Finally, the circuit $\mathcal C^g$ computes the Boolean function $g_{y,A}$, which is defined to be
\begin{align*}
g_{y,A}(j)=
\begin{cases}
1, & G_j > G_y \text{ and } j\notin A, \\
0, & \text{else}.
\end{cases}
\end{align*}
The state $\ket{j}\ket{g_{y,A}(j)}$ is then combined with the quantum search algorithm \cite{Gover1996-AFQ}, for finding the maximum. %

In addition to the fidelity that was used as similarity measure in the original work, one can consider the Hamming distance for the use with binary feature descriptors. A corresponding quantum circuit can be found in \cite{ruan2017quantum}, using the incrementation circuit. This incrementation circuit along with the swap test needed for computing $G_j$ is provided in Fig.~\ref{fig:hamming}. 
\begin{figure}[!t] %
	\centering
		\scalebox{0.95}{
			\begin{tikzpicture}
			\begin{yquant}
			qubit {$\ket{q_{\idx}}$} q[3];
			qubit {$\ket{1}$} anc;
			cnot q[0] | anc;
			cnot anc | q[0];
			cnot q[1] | anc;
			cnot anc | q[1] ~ q[0];
			cnot q[2] | anc;
			cnot anc ~ q[0-1];
			\end{yquant}
			\end{tikzpicture}
}
		\scalebox{0.95}{
			\centering
			\begin{tikzpicture}
			\begin{yquant}
			qubit {$\ket{0}$} anc;
			qubit {$\ket{x}$} x;
			qubit {$\ket{y}$} y;
			h anc;
			swap (x, y) | anc;
			h anc;
			measure anc;
			\end{yquant}
			\end{tikzpicture}
		}
	\caption{(Left) Circuit realizing the incrementation operation. In this example we consider an integer $q$ represented in the binary basis $(q_0,q_1,q_3)$. The output of this circuit corresponds to the increment $q+1$ and can be generalized for an arbitrary basis $(q_0,\dots,q_{n-1})$. (Right) Swap test for distance computing. The probability of measuring the auxillary qubit to be $\ket{0}$ is $P(\ket{0}_{a})=\frac{1}{2}+\frac{1}{2}\braket{x|y}^2$ and thus the fidelity can be estimated.}
	\label{fig:hamming}
\end{figure}
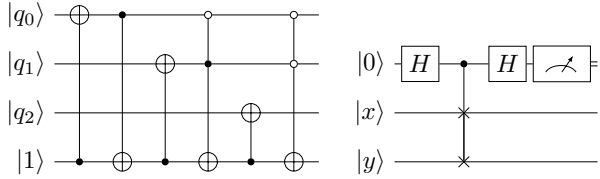

Since NISQ-devices noisy and have a short decoherence time, amplitude amplification techniques like Grover-search cannot implemented faithfully. Nevertheless, upcoming generations of quantum computing hardware will certainly allow us to realize the quantum nearest neighbor-based matching approach.  %

\section{Experimental Evaluation}\label{sec:experiments}

For our experimental evaluation, we consider $344$ sets of $k=5$ images from the Kaggle \enquote{Draper Satellite Image Chronology} challenge\footnote{\url{https://www.kaggle.com/c/draper-satellite-image-chronology/data}} which all have a resolution of $3099\times 2329$ pixels.

\subsection{Experimental Protocol}

Each pixel $\vec{p}$ is represented by a 5-dimensional vector which captures the position in the image as well as the RGB color channels, $\vec{p}=(x,y,r,g,b)$. %
We further down-weight the location information by a factor of $\nicefrac{1}{4}$ to emphasize the importance of the color channels. Finally, pixel vectors are normalized, i.e. $\left\| \vec{p}\right\|_2=1$. %

The raw image resolution implies that the QUBOs from Props.~\ref{prop:k_medoids} and \ref{prop:kdc} are $7\,217\,571$-dimensional---far beyond the capabilities of any quantum annealer or gate-based quantum computer. 
We hence split the task recursively into sub-tasks. First, redundant information is reduced by down-sampling images to $928\times 704$ pixels. Then, each image is split into $32\times 32$ equally sized non-overlapping sub-images, which results in patches of size $29\times 22$ pixels. Fig.~\ref{fig:keypoints_pipeline:fourth} depicts one exemplary patch of the image from Fig.~\ref{fig:keypoints_pipeline:first}. The keypoint extraction on the original image is an iterative process of finding keypoints in the current \enquote{layer} and then merging them to form the next dataset for clustering. %
The Lagrange parameters are chosen such that every summand in \eqref{eq:qubo_k_medoid} and \eqref{eq:qubo_kdc} has approximately the same contribution. 
For $k$-medoids clustering we set $\alpha=1/k$, $\beta=1/n$ and $\gamma=1 /k$, and for KDC we set $\lambda=1/k^2$. Since the parameters $\gamma$ and $\lambda$ weigh the constraint of finding exactly $k$ cluster centroids, setting these values too low can result in finding states not adhering this constraint. %
For the matching problem we employ SIFT feature descriptors in a normalized inner product kernel. The QUBO parameters (see Prop. \ref{prop:matching}) are set to $k=1$, $\beta=1$, $\gamma=1$, while $\alpha$ is varied for showing the effect of this parameter. %

QUBO solvers process the same problem multiple times to prevent local optima. The state with the lowest energy is then chosen to be the solution. We consider a digital annealer (10 shots, runtime $1s$ per shot), a D-Wave Advantage System 5.1 with 5619 qubits (1024 shots, runtime $50\,\mu s$ per shot), simulated annealing, and tabu search, the latter being implemented in the D-Wave Ocean SDK\footnote{\href{https://docs.ocean.dwavesys.com/en/stable/}{https://docs.ocean.dwavesys.com/en/stable/}}. 

For computing quantum kernels, %
we consider a statevector simulation and an IBM Falcon superconducting quantum processor. The circuits on the IBM system are executed with $10\,000$ shots.

\begin{figure*}[t!]
	\centering
	\begin{subfigure}[t]{0.24\textwidth}
		\centering
		\includegraphics[width=\textwidth]{./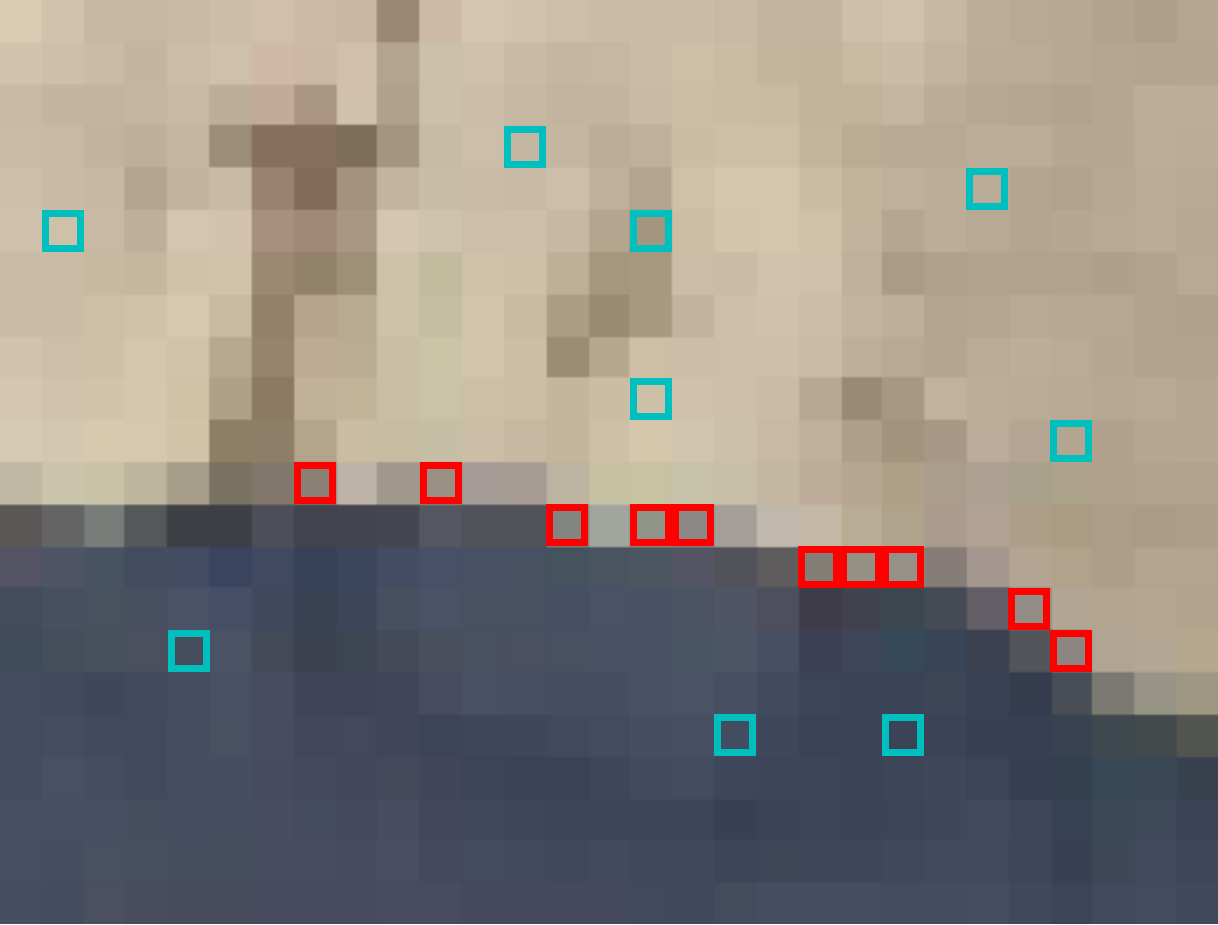}
		\\[0.5cm] \includegraphics[width=\textwidth]{./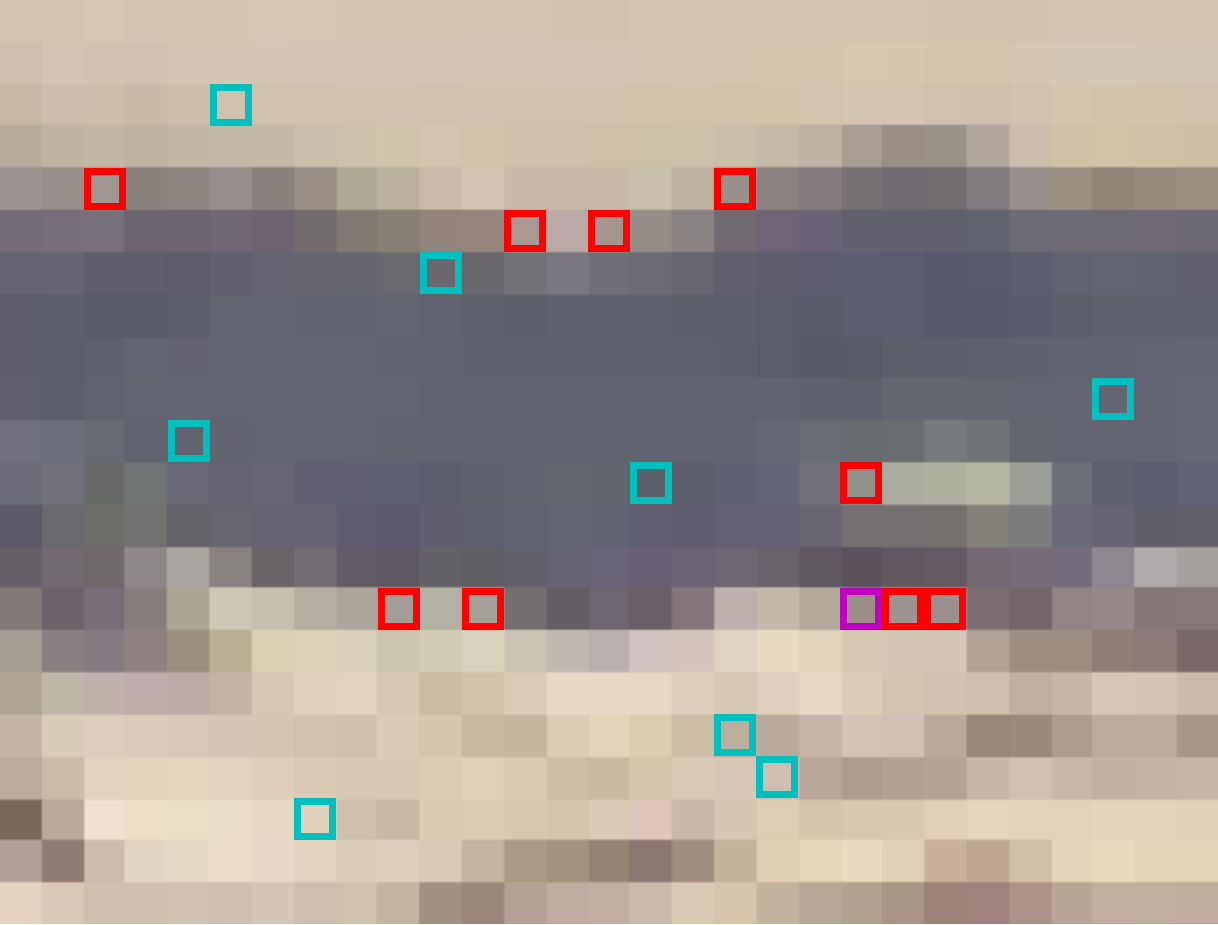}
		\caption{Extraction of 10 keypoints on a single image patch.}
		\label{fig:keypoints_pipeline:fourth}
	\end{subfigure}
	\begin{subfigure}[t]{0.24\textwidth}
		\centering
		\includegraphics[width=\textwidth]{./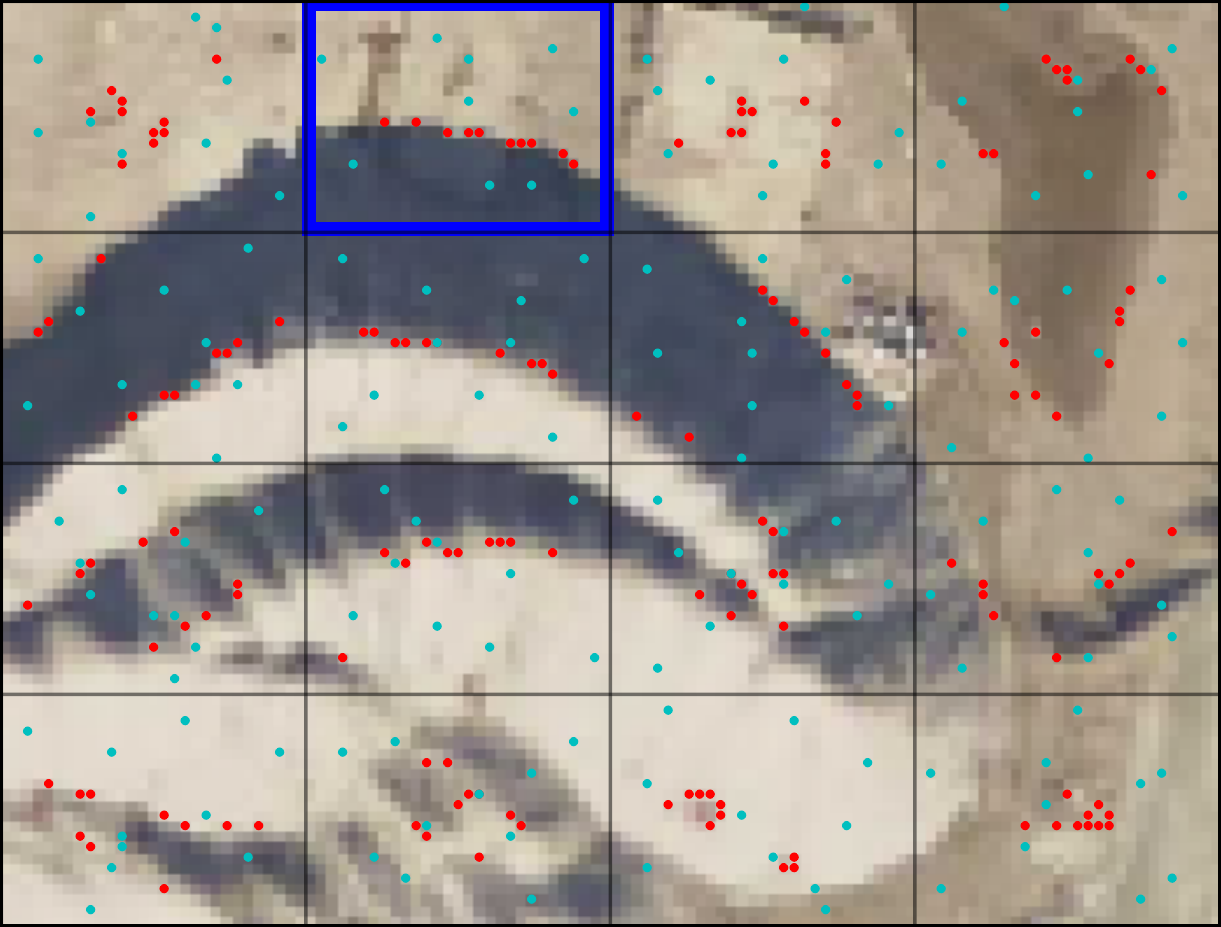}
		\\[0.5cm] \includegraphics[width=\textwidth]{./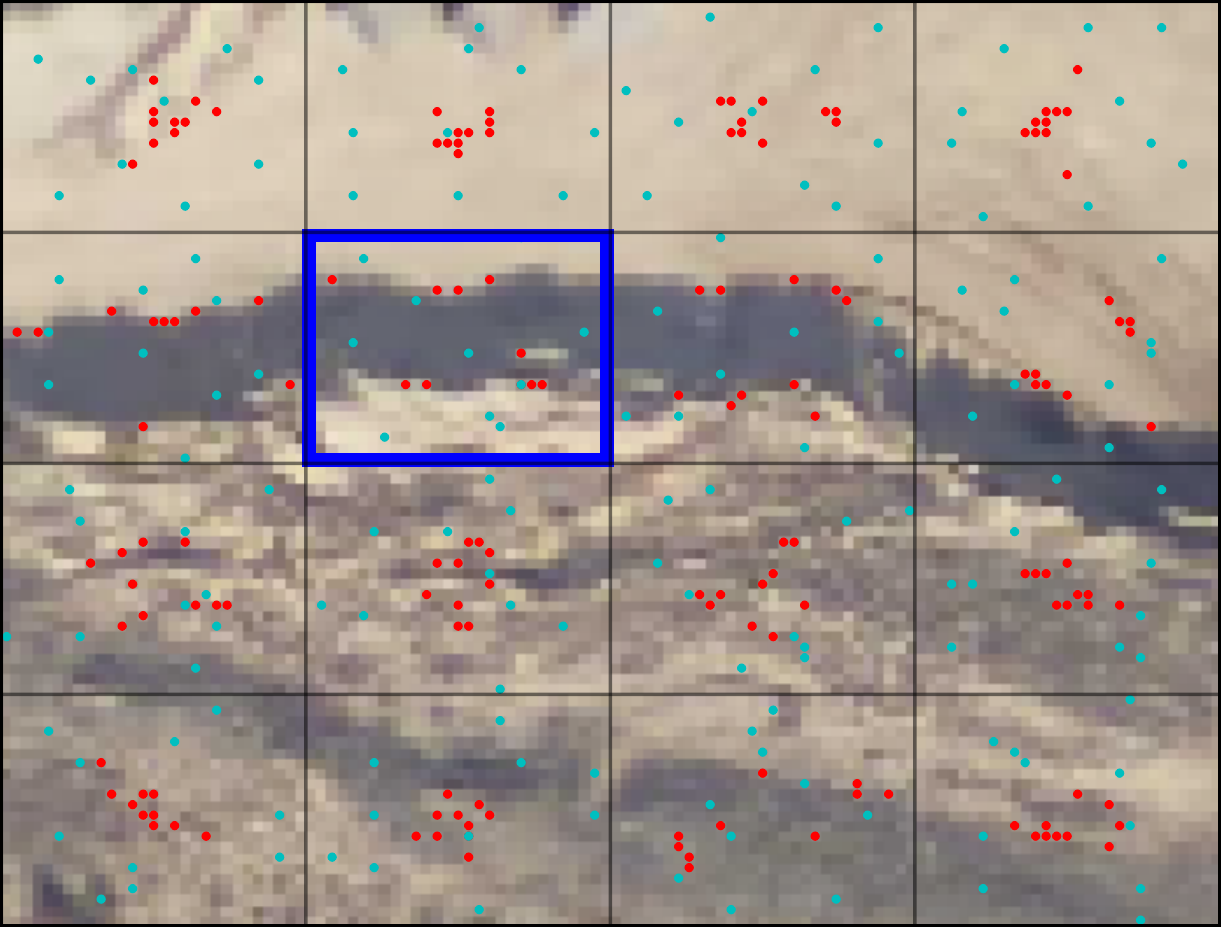}
		\caption{Extraction of 10 keypoints on every image patch.}
		\label{fig:keypoints_pipeline:third}
	\end{subfigure}
	\begin{subfigure}[t]{0.24\textwidth}
		\centering
		\includegraphics[width=\textwidth]{./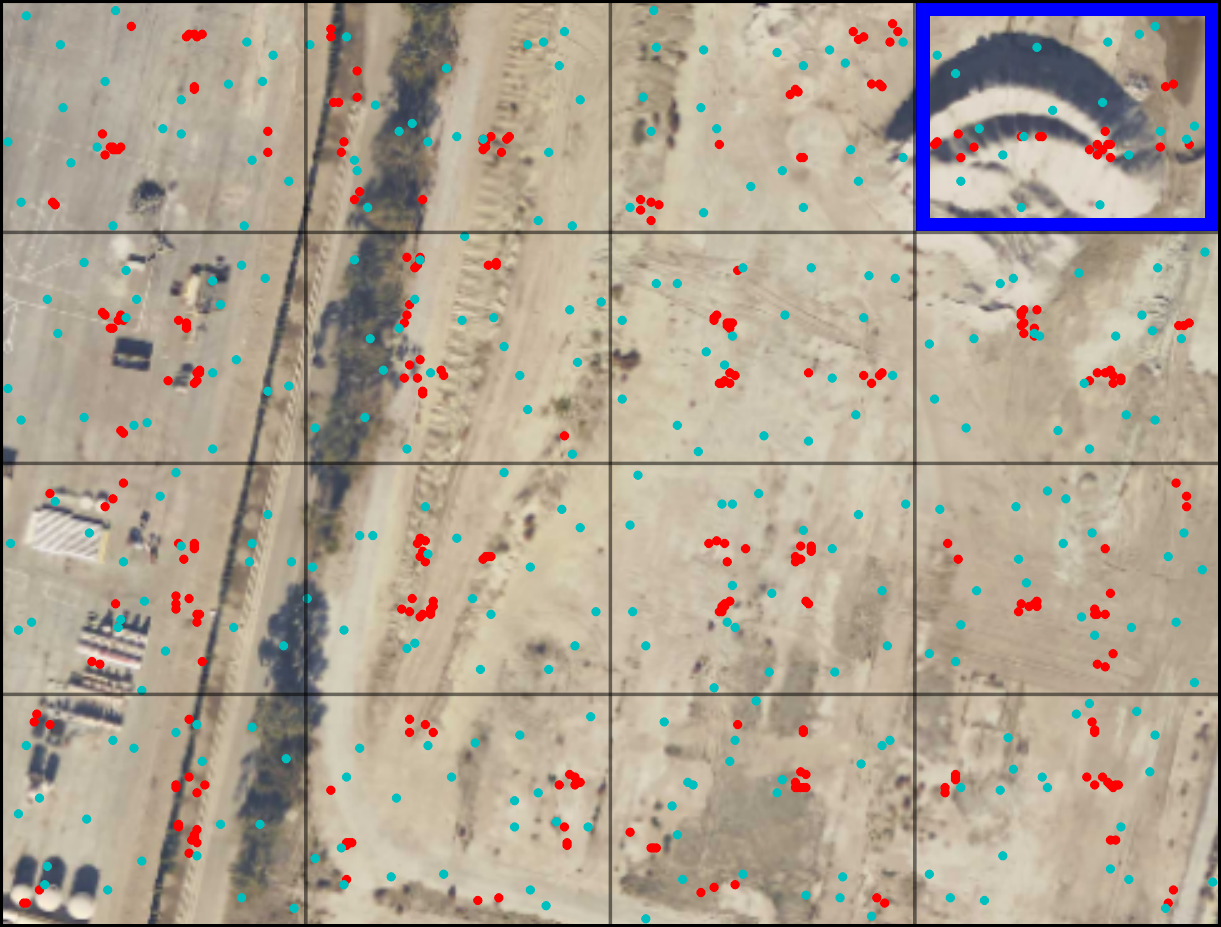}
		\\[0.5cm] \includegraphics[width=\textwidth]{./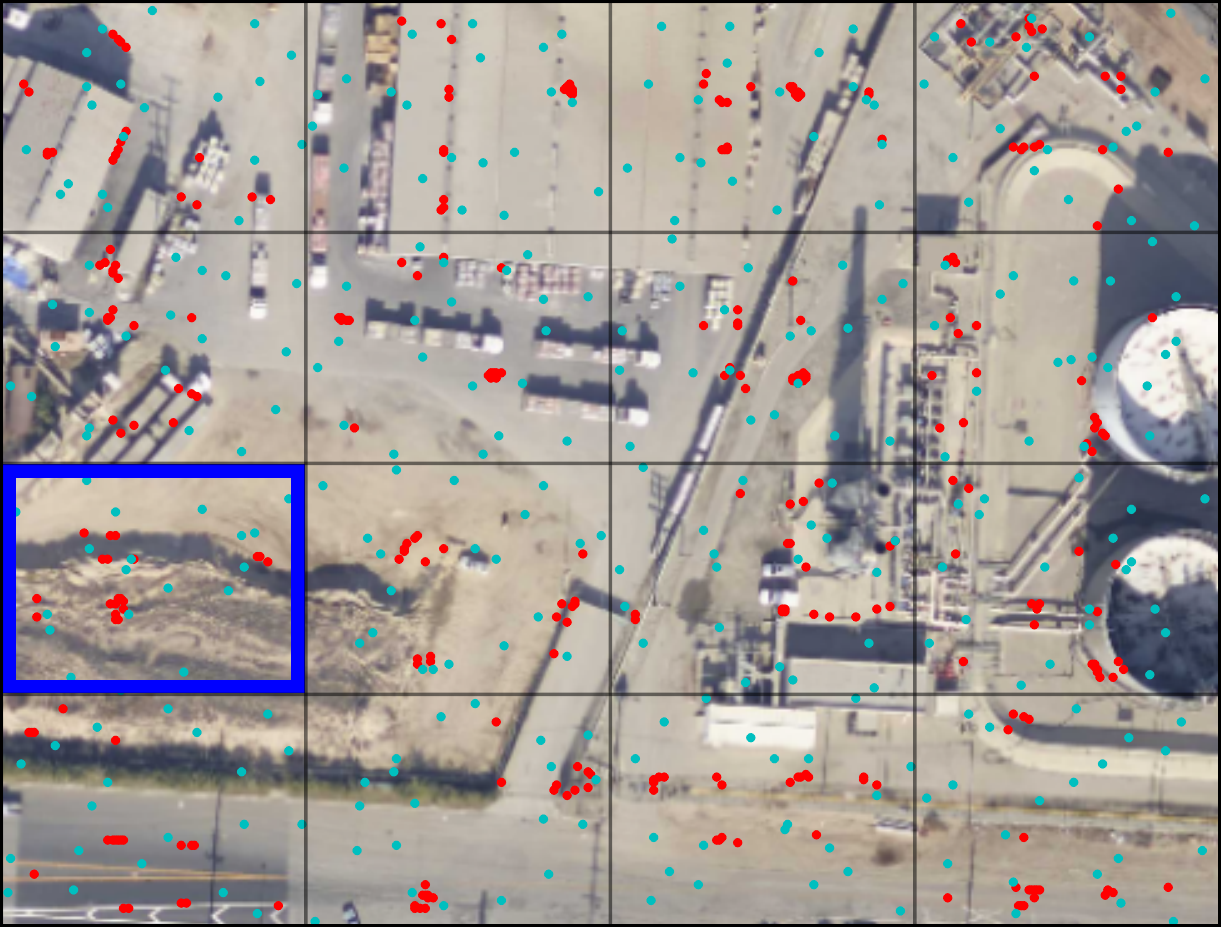}
		\caption{Extraction of 20 keypoints on $4\times4$ grid.}
		\label{fig:keypoints_pipeline:second}
	\end{subfigure}
	\begin{subfigure}[t]{0.24\textwidth}
		\centering
		\includegraphics[width=\textwidth]{./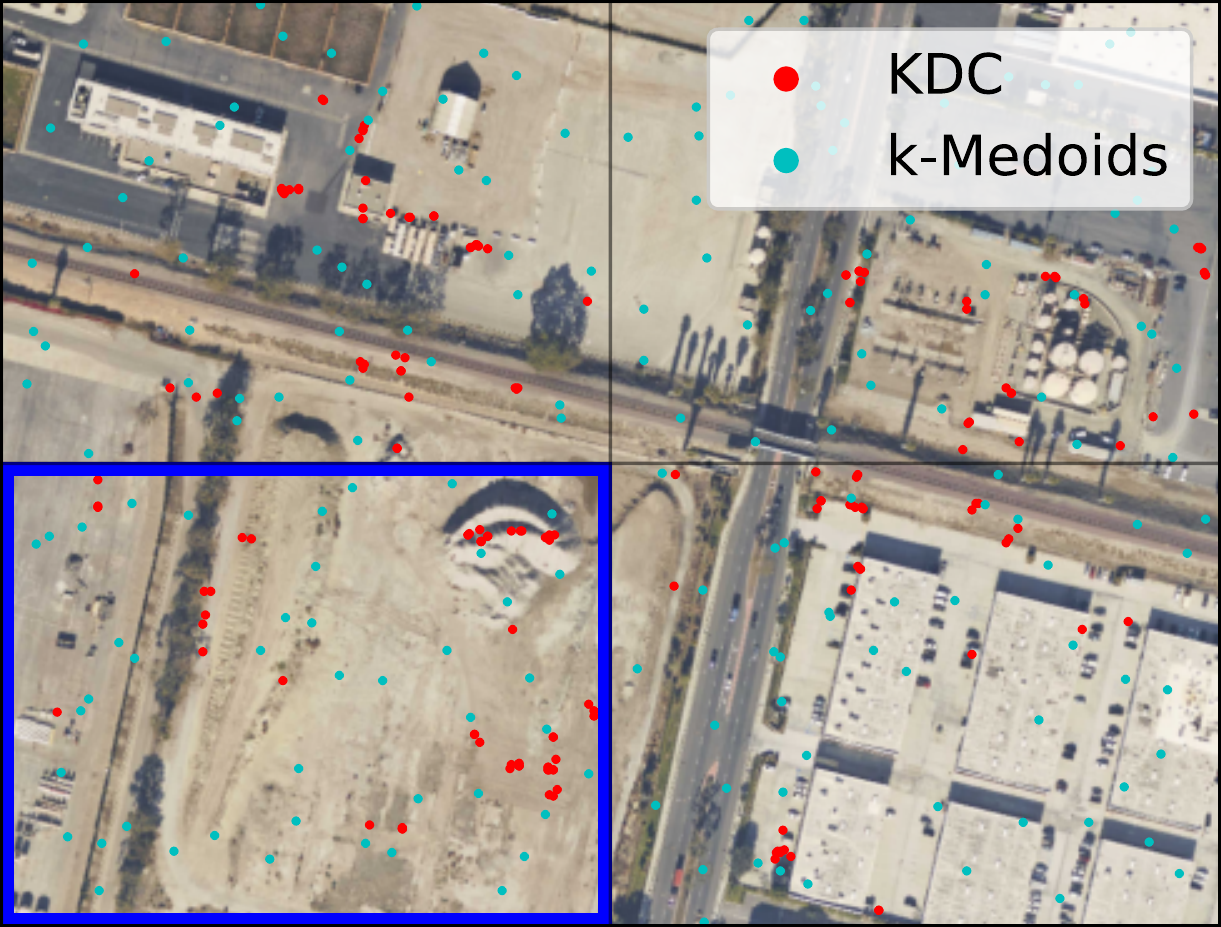}
		\\[0.5cm] \includegraphics[width=\textwidth]{./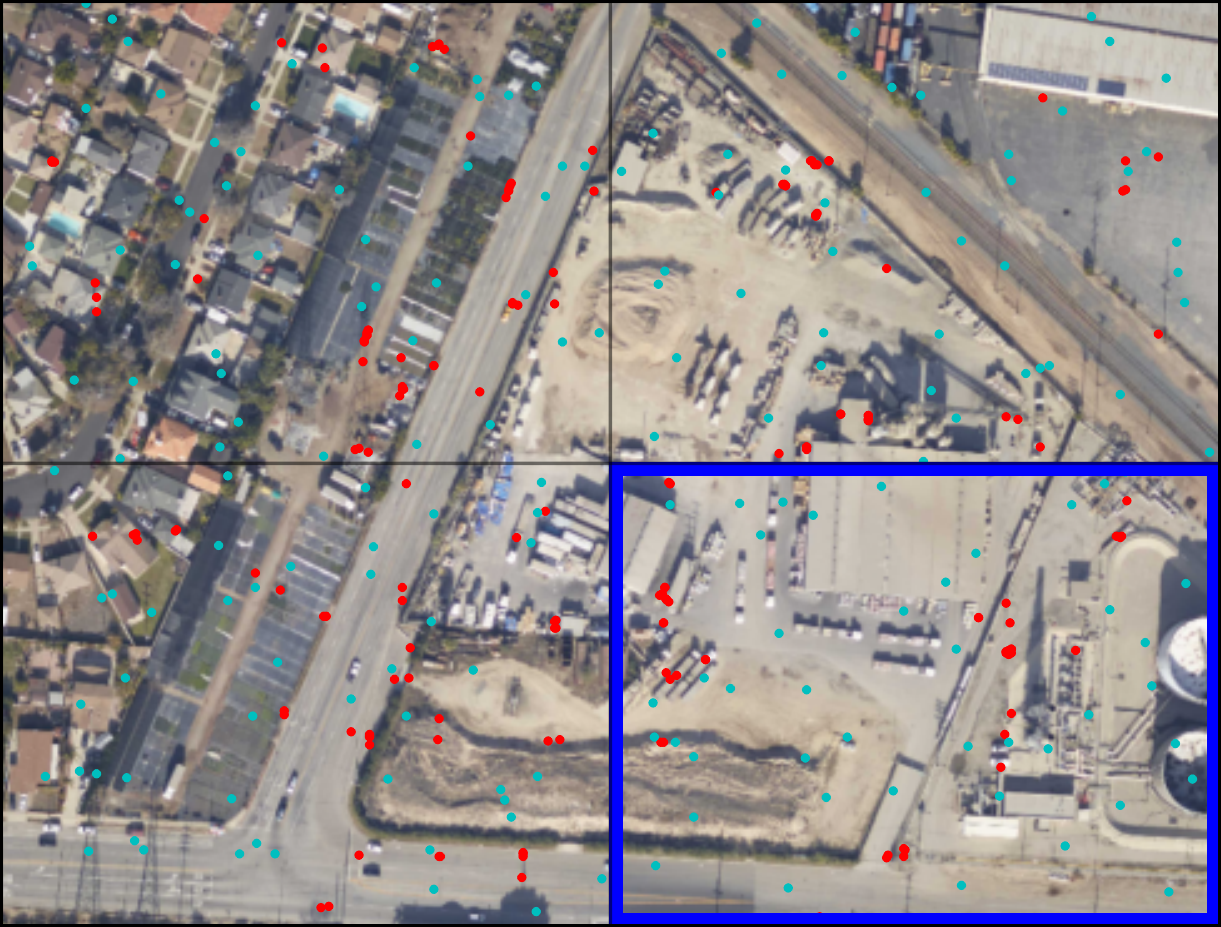}
		\caption{Extraction of 45 (final) keypoints on $4\times4$ grid.}
		\label{fig:keypoints_pipeline:first}
	\end{subfigure}
	\caption{Keypoint extraction is done by subsequently computing cluster centroids on sub-images by solving the QUBOs given in Sec.~\ref{sec:quantum_clustering} using the digital annealers. The image is first split into $32\times32$ image patches of size $29\times22$ pixels and ten keypoints are extracted on every patch. The resulting centroids are then grouped into $8\times 8$ grids of size $4\times 4$ to obtain keypoint sets of size 160 shown in (a) and (b). We extract 20 keypoints on each of these sets and group the resulting cluster centroids into $2\times 2$ grids of size $4\times4$ to obtain keypoint sets of size 320 (c). Lastly, 45 cluster centroids are computed on these sets to obtain the final set of 180 keypoints (d).}
	\label{fig:keypoints_pipeline}
\end{figure*}

\begin{figure*}[t!]
	\centering
	\begin{subfigure}[t]{0.24\textwidth}
		\centering
		\includegraphics[width=\textwidth]{./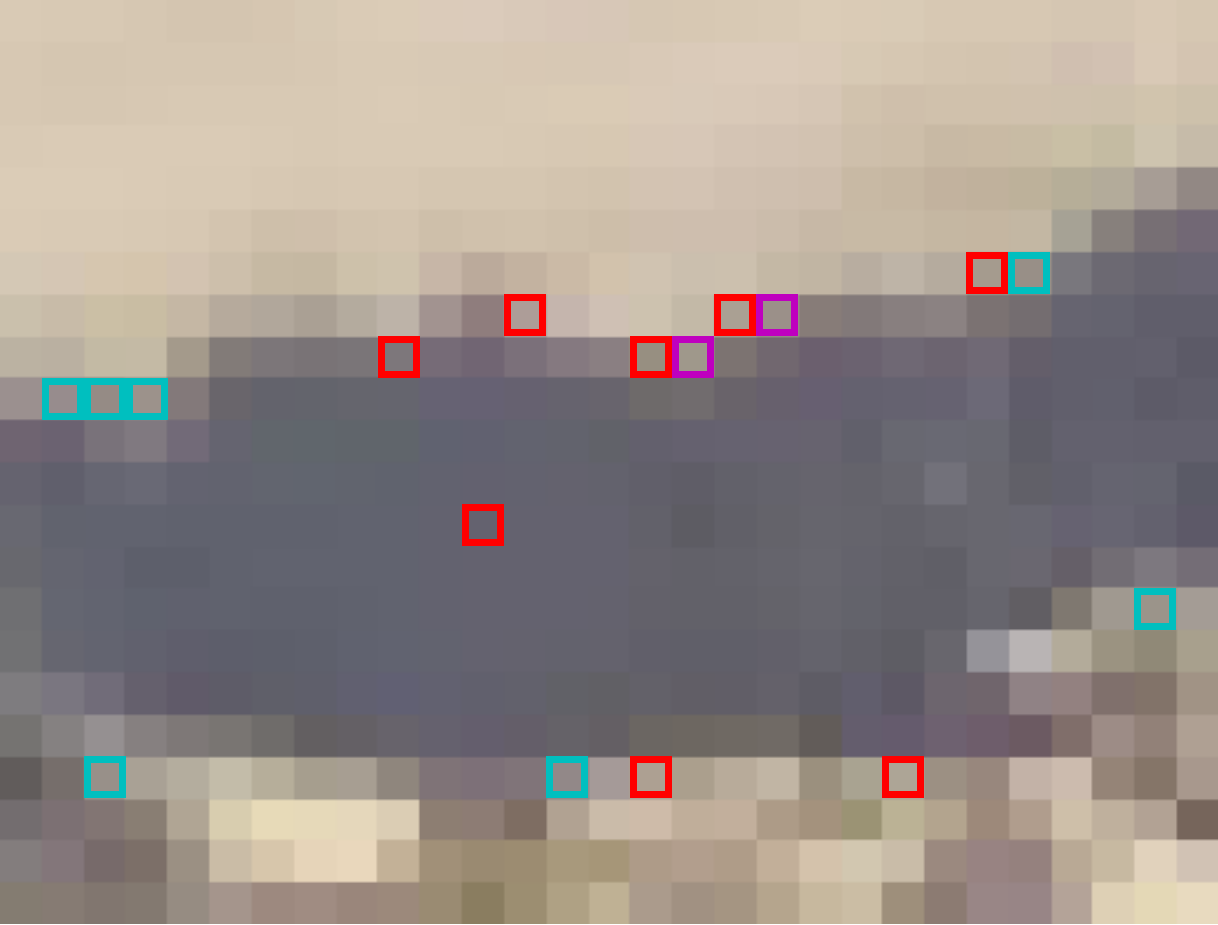}
		\caption{}
		\label{fig:keypoints:first}
	\end{subfigure}
	\begin{subfigure}[t]{0.24\textwidth}
		\centering
		\includegraphics[width=\textwidth]{./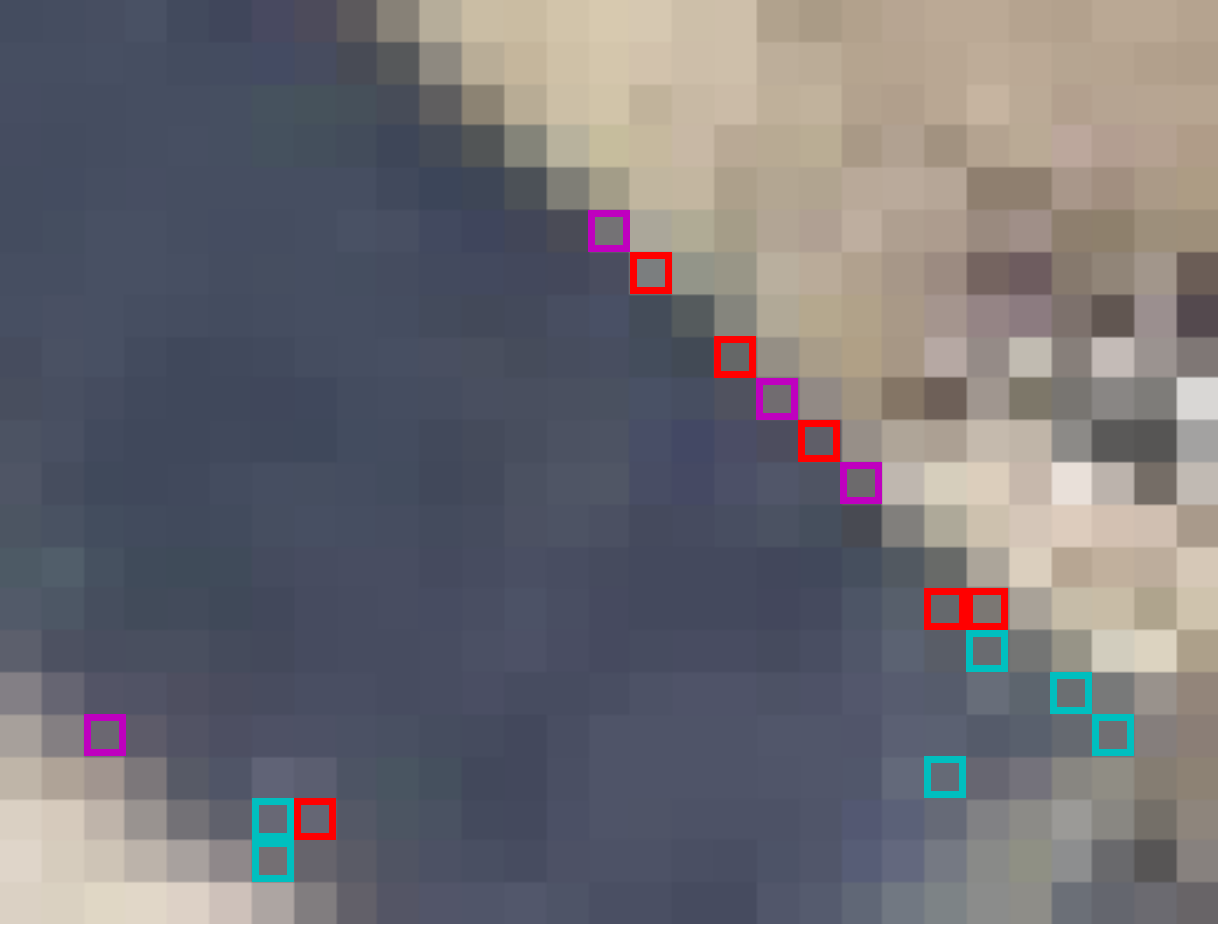}
		\caption{}
		\label{fig:keypoints:second}
	\end{subfigure}
	\begin{subfigure}[t]{0.24\textwidth}
		\centering
		\includegraphics[width=\textwidth]{./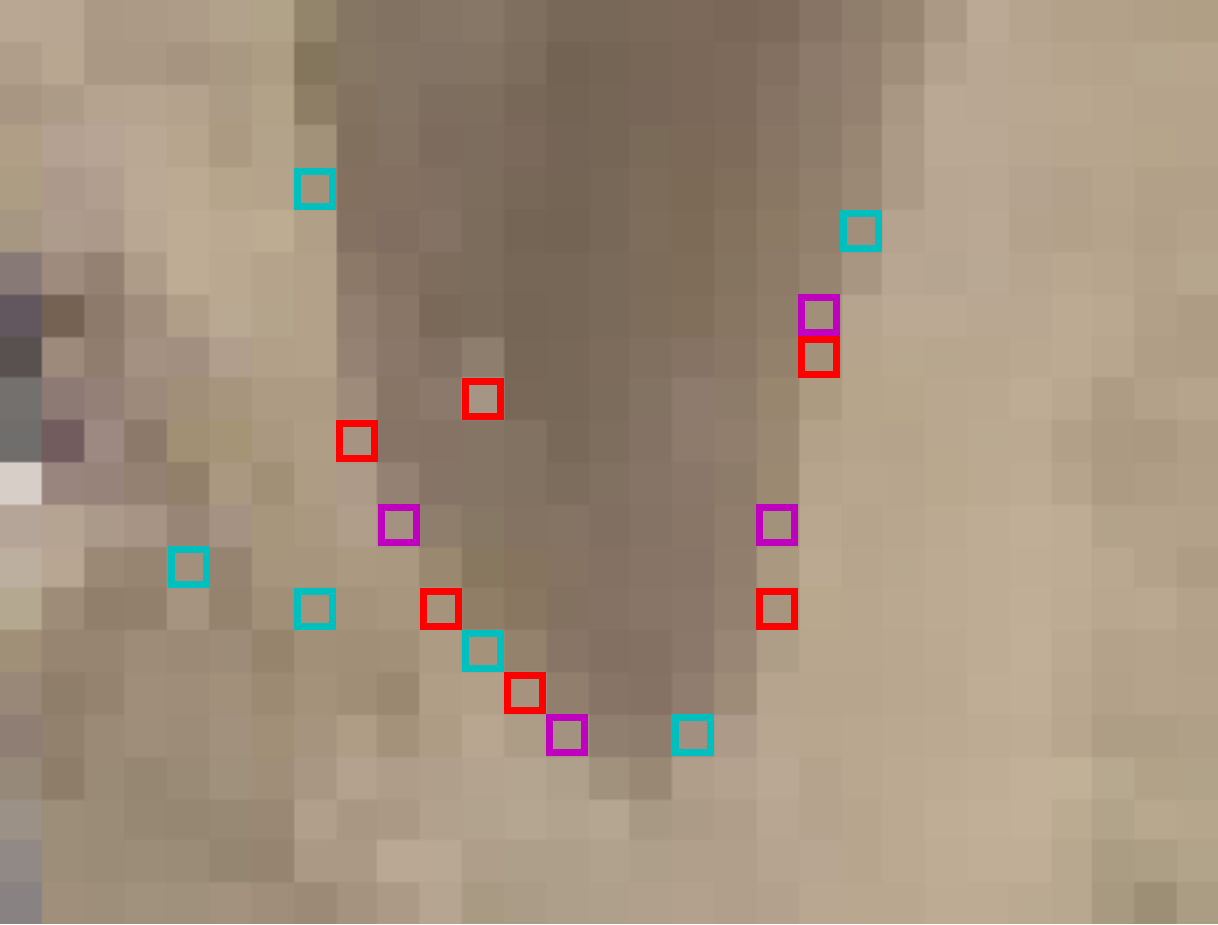}
		\caption{}
		\label{fig:keypoints:third}
	\end{subfigure}
	\begin{subfigure}[t]{0.24\textwidth}
		\centering
		\includegraphics[width=\textwidth]{./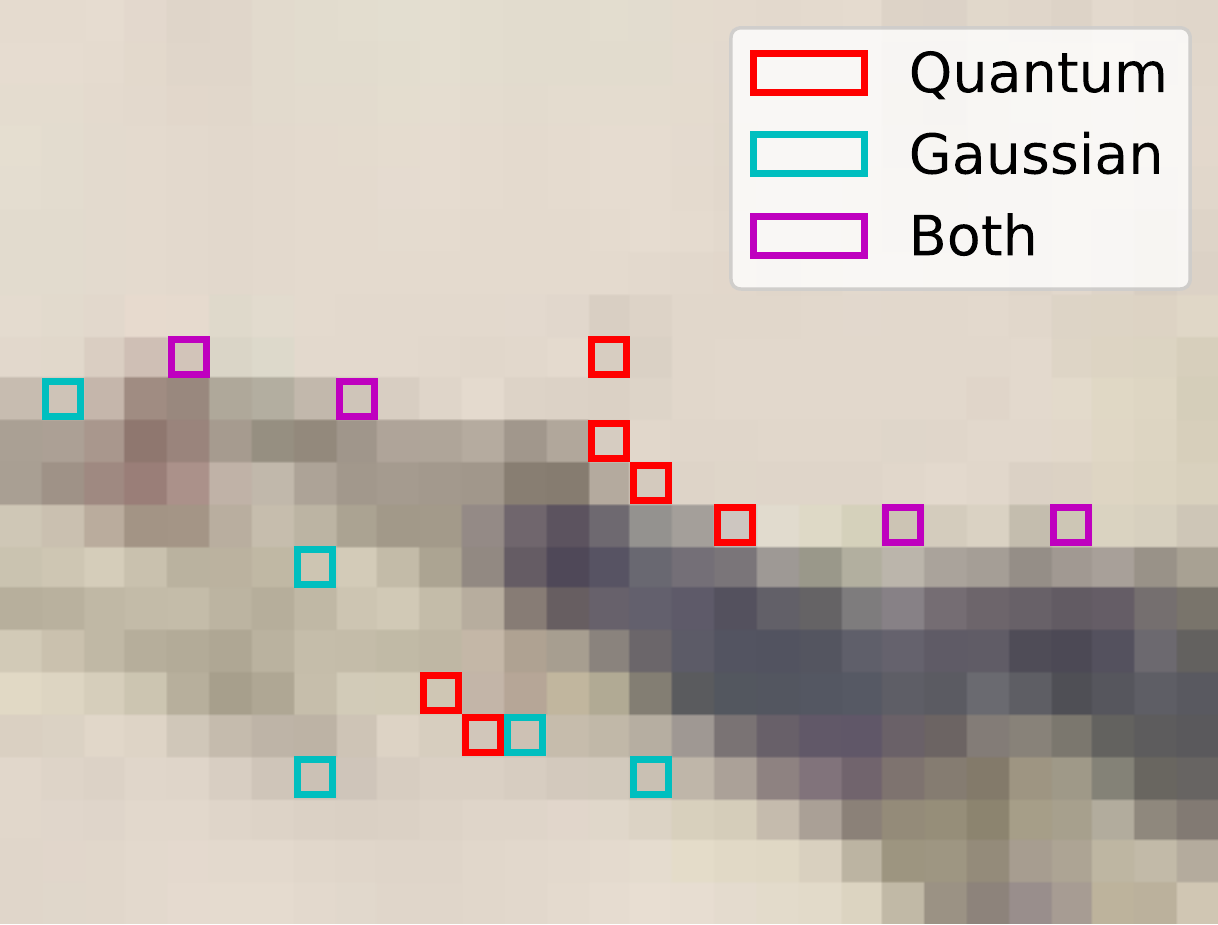}
		\caption{}
		\label{fig:keypoints:fourth}
	\end{subfigure}
	\caption{Extraction of 10 keypoints on $29\times22$ patches with KDC: Comparison between Gaussian and quantum kernel.}
	\label{fig:keypoints}
\end{figure*}

\begin{figure*}
	\centering
	\includegraphics[width=\textwidth]{./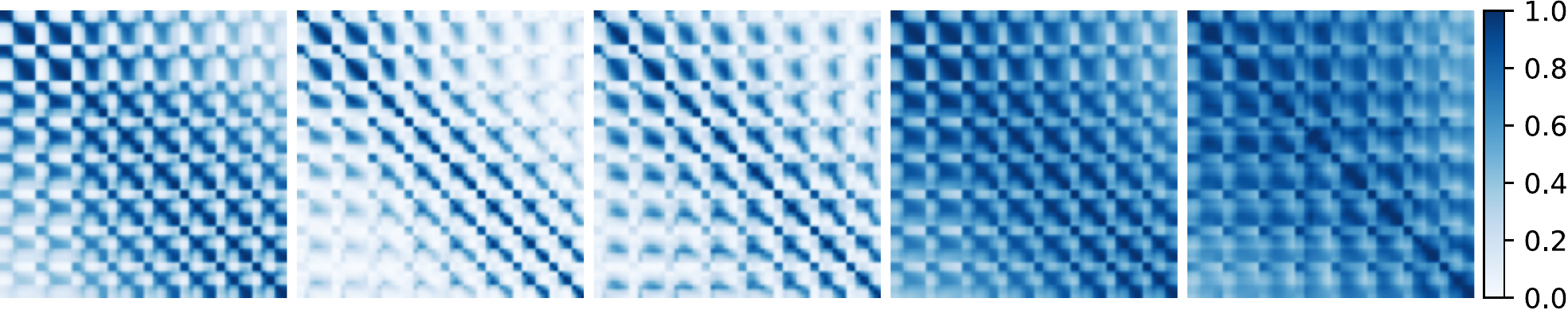}
	\caption{Comparison of kernel matrices computed on an exemplary $8\times8$ image patch shown in Fig.~\ref{fig:8x8:fifth}. Specifics on the used kernel from left to right: Gaussian kernel, quantum kernel computed with simulation, quantum kernel computed with real quantum hardware, quantum kernel computed with simulation with inputs being scaled by a factor $\lambda=0.5$, quantum kernel computed with real quantum hardware with inputs being scaled by a factor $\lambda=0.5$.}
	\label{fig:kernel_matrix}
\end{figure*}

\renewcommand{\arraystretch}{1.15}
\begin{table*}[t!]
	\centering
	\caption{Keypoint extraction on 10 $8\times 8$ patches: Comparison of energy values between the D-Wave quantum annealer Advantage System 5.1, simulated annealing and digital annealing.}
	\label{tab:8x8}
	\begin{tabular*}{\textwidth}{l @{\extracolsep{\fill}}cccccccccc}
\toprule
		& (a)    & (b)    & (c)    & (d)    & (e)    & (f)    & (g)    & (h)    & (i)    & (j)       \\
\midrule
		D-Wave Advantage System 5.1 & -4.714 & -4.797 & -4.737 & -4.690  & -4.751 & -4.343 & -4.294 & -4.404 & -4.442 & -4.402 \\
		Simulated Annealing & -4.742 & -4.818 & -4.768 & -4.748 & -4.780  & -4.790  & -4.726 & -4.782 & -4.809 & -4.793 \\
		Digital Annealing  & \bf{-4.750}  & \bf{-4.822} & \bf{-4.774} & \bf{-4.758} & \bf{-4.781} & \bf{-4.793} & \bf{-4.740}  & \bf{-4.788} & \bf{-4.813} & \bf{-4.797}\\
		\bottomrule
	\end{tabular*}
\end{table*}

\begin{figure*}[t!]
	\centering
	\begin{subfigure}[t]{0.19\textwidth}
		\centering
		\includegraphics[width=\textwidth]{./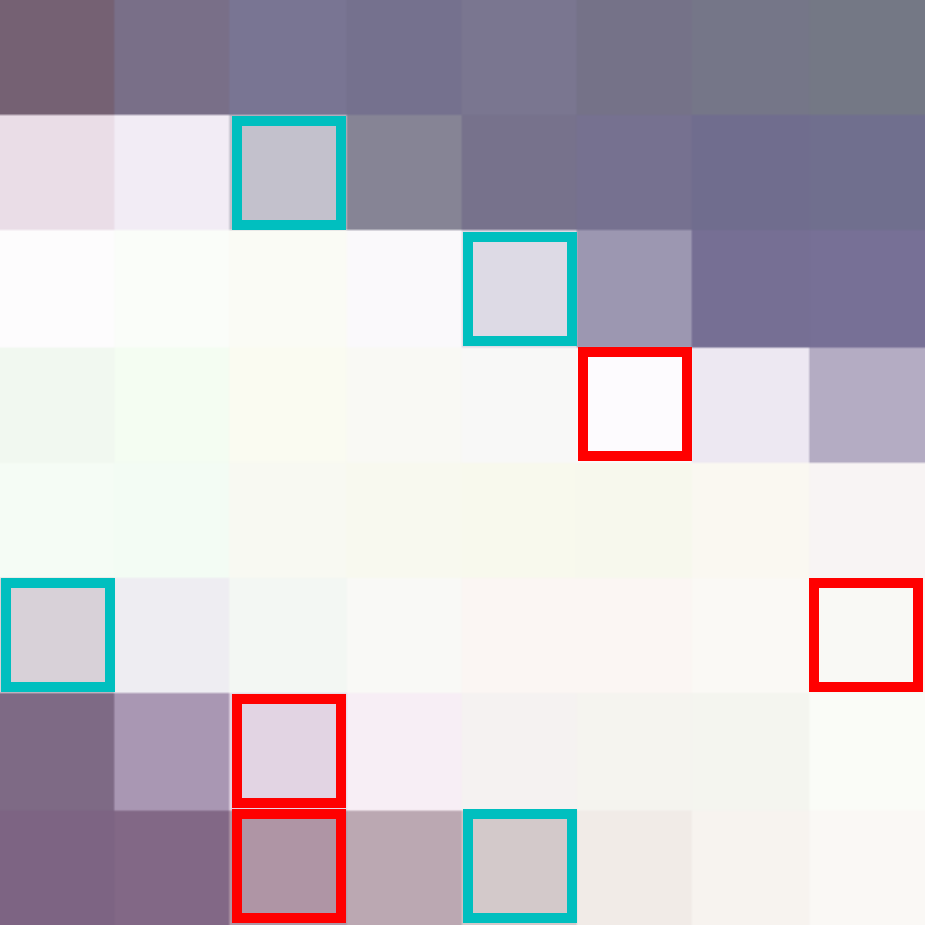}
		\caption{}
		\label{fig:8x8:first}
	\end{subfigure}
	\begin{subfigure}[t]{0.19\textwidth}
		\centering
		\includegraphics[width=\textwidth]{./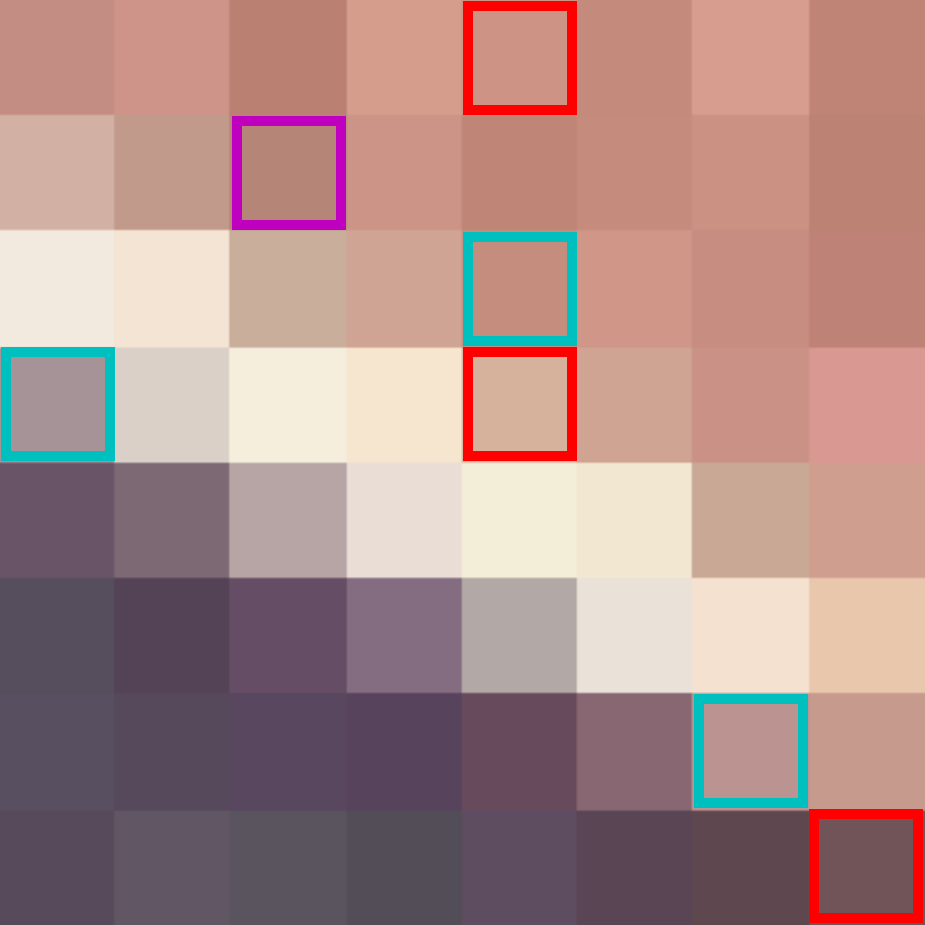}
		\caption{}
		\label{fig:8x8:second}
	\end{subfigure}
	\begin{subfigure}[t]{0.19\textwidth}
		\centering
		\includegraphics[width=\textwidth]{./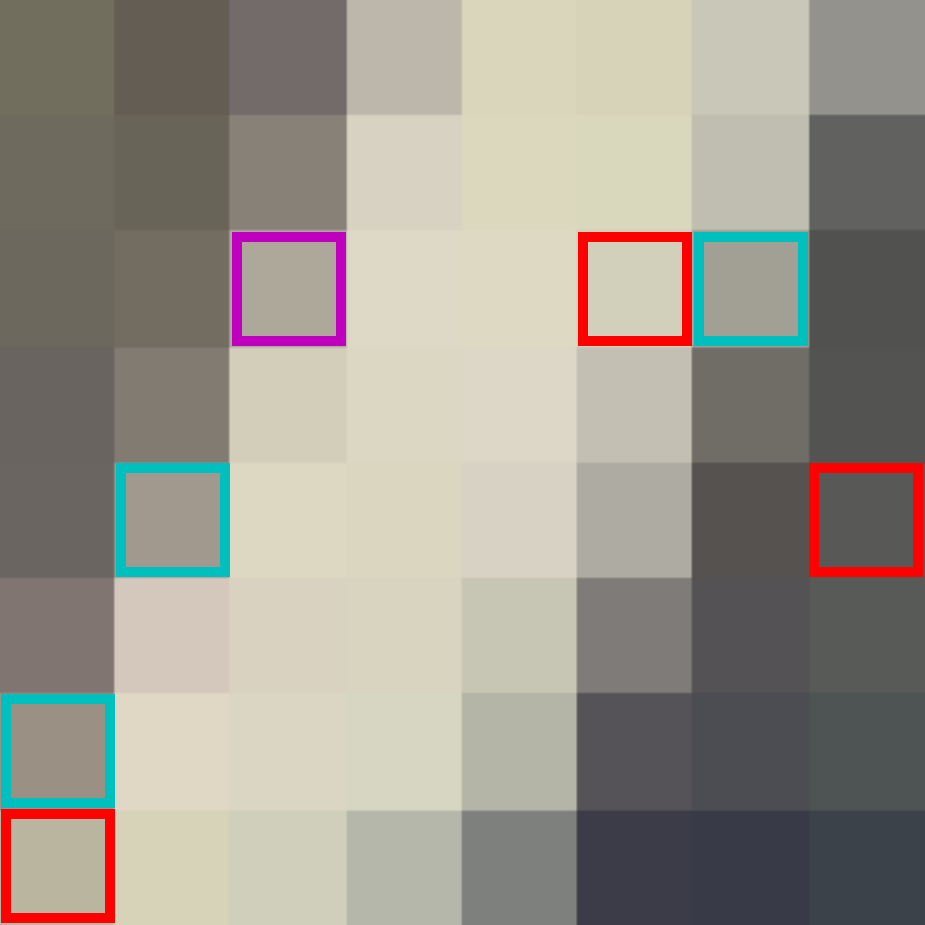}
		\caption{}
		\label{fig:8x8:third}
	\end{subfigure}
	\begin{subfigure}[t]{0.19\textwidth}
		\centering
		\includegraphics[width=\textwidth]{./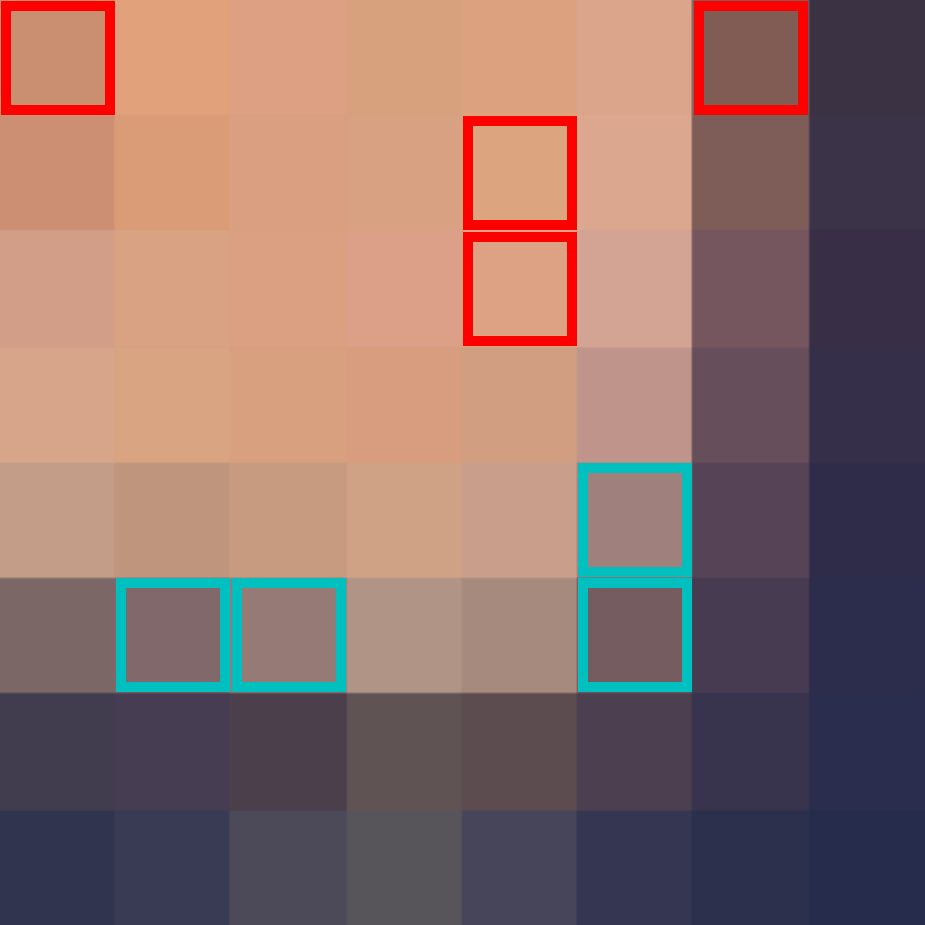}
		\caption{}
		\label{fig:8x8:fourth}
	\end{subfigure}
	\begin{subfigure}[t]{0.19\textwidth}
		\centering
		\includegraphics[width=\textwidth]{./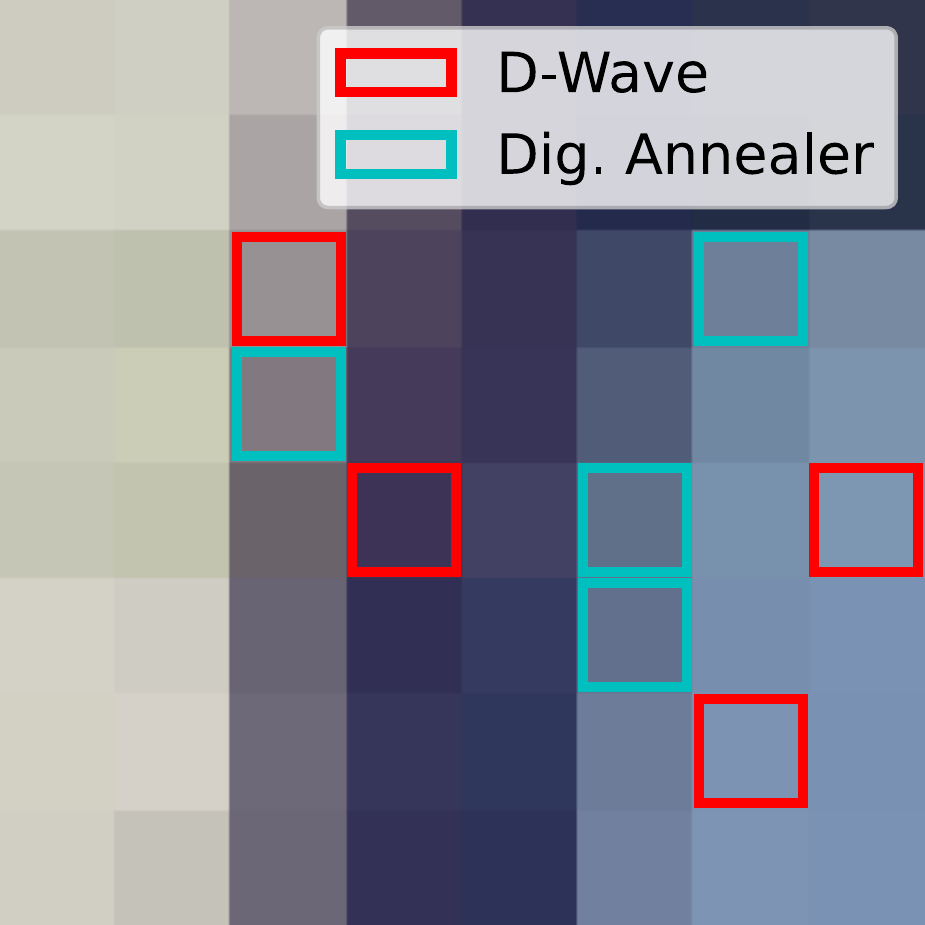}
		\caption{}
		\label{fig:8x8:fifth}
	\end{subfigure}
	\caption{Extraction of 10 keypoints on $8\times 8$ patches: Comparison of D-Wave quantum annealer and digital annealer.}
	\label{fig:8x8}
\end{figure*}

\subsection{Results}

Two exemplary results of the keypoint extraction pipeline can be found in Fig.~\ref{fig:keypoints_pipeline}. It is evident from every single subfigure that $k$-medoids clustering and KDC allocate the cluster centroids substantially different. While $k$-medoids clustering spreads its centroids equally distributed over the whole image patch, KDC is able to detect edges which is very useful for keypoint extraction. However, if such edges are represented by only a few pixels in the image patch (low density), KDC may not detect them. This is e.g. evident from the left and right neighbor patches of the highlighted patch in the top row of Fig.~\ref{fig:keypoints:third}. In such cases, the cluster centroids are driven towards the center of the patch, since most density is then captured in the position of the pixels---a proper re-weighting of pixel locations can hence be considered as a hyper parameter of the proposed method. In almost all cases, the digital annealer outperforms simulated annealing and tabu search. 

A comparison between the usage of a Gaussian kernel and a quantum kernel for KDC is depicted in Fig.~\ref{fig:keypoints}. Keypoints are extracted on four different $29\times 22$ image patches solving the KDC QUBO problem with the digital annealer. The quantum kernel is computed via Schrödinger wave-function / statevector simulation---we can see that KDC with a quantum kernel distributes its cluster centroids slightly different to the ones using a Gaussian kernel, while also capturing interesting pixel locations. Constructing a full quantum pipeline can hence be a viable approach.

Fig.~\ref{fig:kernel_matrix} depicts a comparison of kernel matrices of a Gaussian kernel with a quantum kernel computed from the $8\times8$ patch in Fig.~\ref{fig:8x8:fifth}. We here show the effects on the kernel matrices of scaling the inputs $\boldsymbol{x}\leftarrow \lambda\boldsymbol{x}$.
The quantum kernel matrices are computed for two different scales, $\lambda=1$ and $\lambda=0.5$.  We compare the results from the statevector simulation with the estimated kernel values using actual quantum hardware. One can see that the quantum kernels computed on actual hardware have a very similar structure to the simulated ones, while the scaling of the inputs substantially affects the \enquote{density} of the kernel matrix. 

In Fig.~\ref{fig:8x8} we compare the performance in solving the KDC QUBO problem with a quantum annealer and the digital annealer. 
Five $8\times8$ image patches are depicted with the corresponding extracted keypoints. Tab.~\ref{tab:8x8} shows the corresponding energy values of the best computed solution. It is clear that the digital annealer is finding better states in terms of objective function value.

Finally, exemplary solutions of the matching QUBO are depicted in Fig.~\ref{fig:matching}. For this, a sub-image with 10 keypoints is rotated by $20^{\circ}$ to obtain the same scene from a different view. The keypoints are then matched by solving the QUBO from Prop.~\ref{prop:matching}. In this case, the digital annealer needs a larger annealing time ($60s$) to find a good state, while tabu search can identify a good solution rather quickly. This shows that not only the QUBO dimension but especially the underlying energy landscape is of great importance for the performance of finding good states. We can see that setting $\alpha$ to a small value leads to finding only a few matches. However, the identified matches have the highest quality, i.e., the largest kernel values. The wrongly matched pair has a larger kernel value than the theoretically correct match, which can be ascribed to the feature representation of SIFT and is not an artefact of the underlying QUBO.

\begin{figure*}[h]
	\centering
	\begin{subfigure}[t]{0.45\textwidth}
		\centering
		\includegraphics[width=\textwidth]{./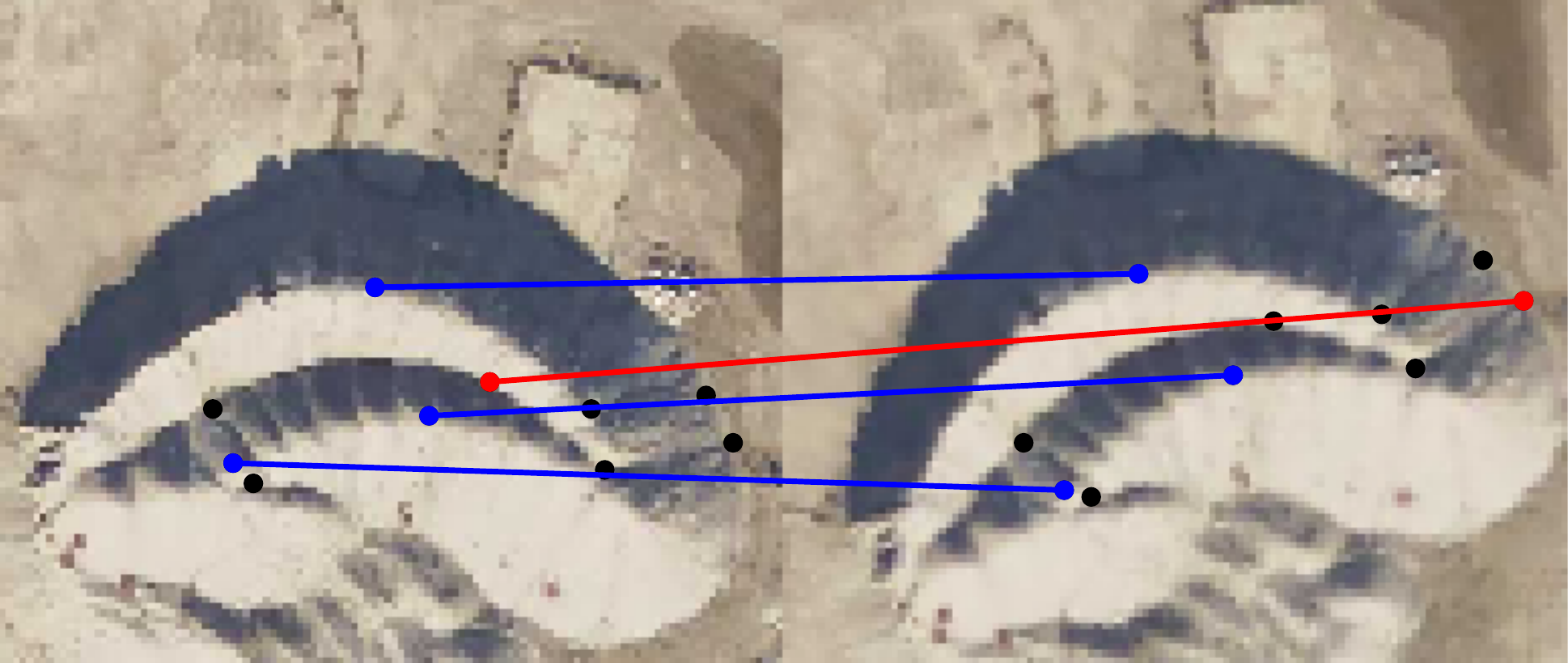}
		\caption{Emphasis on good matches ($\alpha=0.05$).}
		\label{fig:matching:first}
	\end{subfigure}
\hspace*{1cm}
	\begin{subfigure}[t]{0.45\textwidth}
		\centering
		\includegraphics[width=\textwidth]{./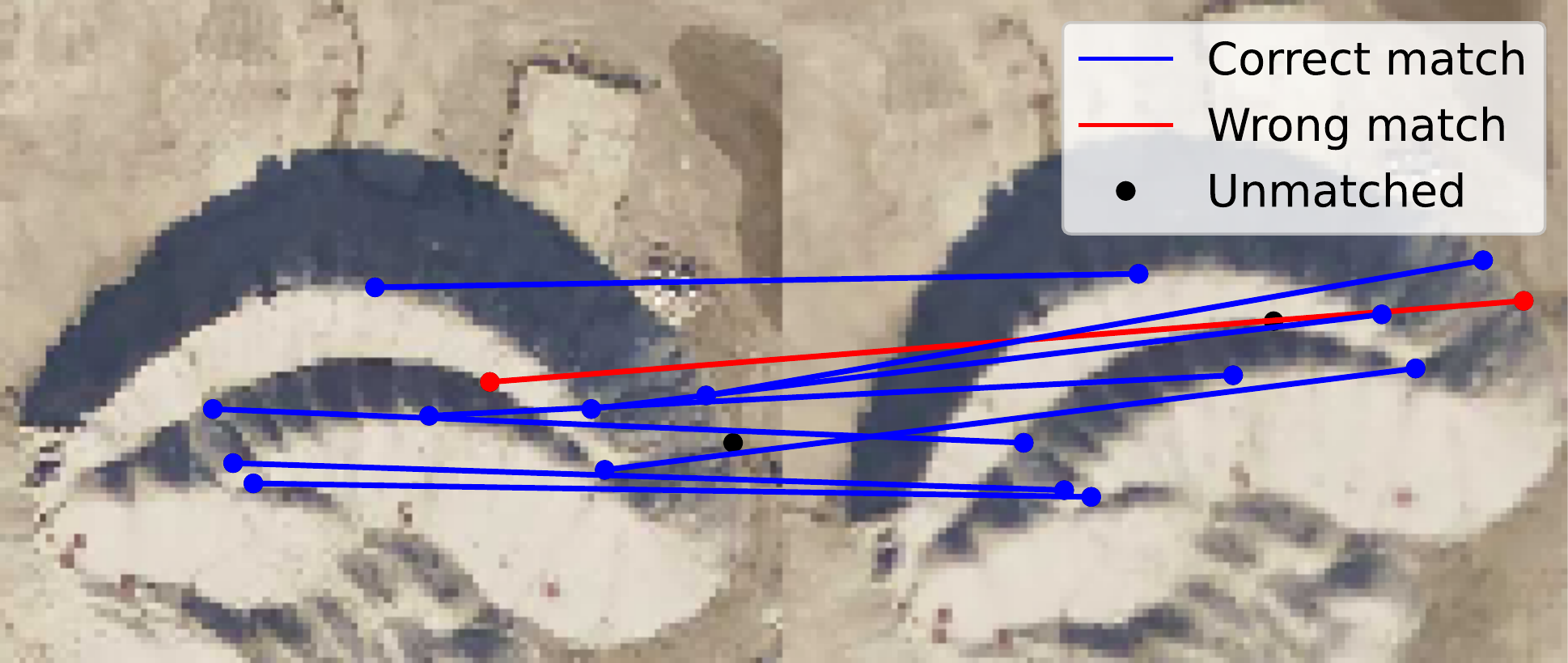}
		\caption{Emphasis on many matches ($\alpha=0.2$).}
		\label{fig:matching:second}
	\end{subfigure}
	\caption{Matching of 10 keypoints on a small image excerpt solving the matching QUBO in Prop.~\ref{prop:matching} with the digital annealer. The right image excerpt corresponds to the left one rotated clockwise by $20^{\circ}$.}
	\label{fig:matching}
\end{figure*}

\section{Conclusion}

In this work we approached the task of bundle adjustment via quantum machine learning. The feature extraction was re-interpreted as a clustering problem. QUBO formulations for the keypoint detection and matching problems have been derived. For the first time, we combined these QUBO problems with quantum kernels, which combines the adiabatic quantum computing paradigm with quantum gate computing. Experiments on actual quantum hardware and digital annealers show that the method delivers reasonable results. One cannot ultimately answer the question which hardware approach is best. Thus, investigating all available quantum computing resources is necessary at the time of writing.

Future work includes the optimization of hyper parameters (e.g., Lagrange multipliers) and investigating the \enquote{qubitization} of the full bundle adjustment task, e.g., formulating \eqref{eq:ba_objective} as a QUBO or quantum circuit. Moreover, quantum kernels can be used in the matching QUBO as well. Respecting the limitations of current quantum hardware, a lower dimensional feature descriptor than SIFT has to be chosen, e.g. PCA-SIFT \cite{pca_sift}. However, recent results lead to improvements for various hardware implementations of quantum computers \cite{Lescanne/etal/2020a}. In any case, our contributed methods open up opportunities for bundle adjustment and other computer vision tasks on the current and upcoming generations of quantum computing hardware. 

\balance


\end{document}